\documentclass[runningheads]{llncs}

\usepackage{graphicx} 
\usepackage{amssymb,amsmath,enumerate,nicefrac,color,fancyhdr,hyperref,adjustbox}
\usepackage{authblk}
\usepackage{algorithm,caption}
\usepackage{algpseudocode}
\usepackage[citestyle=numeric,bibstyle=numeric]{biblatex}
\usepackage{xspace}

\usepackage[margin=1in]{geometry}

\newtheorem{prop}[theorem]{Proposition}

\DeclareMathOperator*{\argmax}{arg\,max}

\renewcommand{\qed}{$\hfill\blacksquare$}

\DeclareMathOperator*{\size}{size}
\newcommand{\rank}{rank}

\newcommand{\var}{Var}
\newcommand{\rpm}{UPM\xspace}

\title{Sequential Linear Contracts on Matroids}
\author{Kanstantsin Pashkovich \and Jacob Skitsko \and Yun Xing}

 \institute{Department of Combinatorics and Optimization \\ University of Waterloo, Canada\\\email{kpashkovich@uwaterloo.ca} \quad \email{jskitsko@uwaterloo.ca} \quad \email{y39xing@uwaterloo.ca}
 }
 
\date{\today}

\begin{document}

\maketitle
\begin{abstract}
In this work, we study sequential contracts under matroid constraints. In the sequential setting, an agent can take actions one by one. After each action, the agent observes the stochastic value of the action and then decides which action to take next, if any. At the end, the agent decides what subset of taken actions to use for the principal's reward; and the principal receives the total value of this subset as a reward.  Taking each action induces a certain cost for the agent. Thus, to motivate the agent to take actions the principal is expected to offer an appropriate contract. A contract describes the payment from the principal to the agent as a function of the principal's reward obtained through the agent's actions. In this work, we concentrate on studying linear contracts, i.e.\ the contracts where the principal transfers a fraction of their total reward to the agent. We assume that the total principal's reward is calculated based on a subset of actions that forms an independent set in a given matroid. We establish a relationship between the problem of finding an optimal linear contract (or computing the corresponding principal's utility) and the so called matroid (un)reliability problem. Generally, the above problems turn out to be equivalent subject to adding parallel copies of elements to the given matroid.

\end{abstract}
\section{Introduction}
Consider the following  interaction between a principal and an agent. An investment firm, that we refer to as  a principal, is interested in managing their investment portfolios. To do the management of the investments, the firm  is hiring an agent, and  the principal offers a contract to the agent. According to the contract the principal commits to pay the agent an amount that depends on the quality of the agent's performance. In their turn, the agent must decide on a set of investments to make on behalf of the principal. 

Both the principal and the agent have the same stochastic information about the value of available investment opportunities. Before making any investment the agent is required to conduct an in depth survey for this investment to determine its true value, i.e.\  to determine the so called outcome of this investment. The agent may conduct the surveys for investments in any order they wish, but conducting each such survey is costly for the agent. 

Furthermore, there might exist some natural constraints on the set of investments to be made. For example, it is reasonable for the firm to require diversified investments so they ``do not put all of their eggs in one basket''. Thus, the investments made by the agent are potentially subject to some structural constraints. At any given point in time, after surveying some of the investment opportunities, the agent may decide that conducting further surveys is not worth the cost. The decision to stop conducting any further surveys is subject to such factors as: the costs of investment surveys, the information about the value of the surveyed investments, the stochastic information about not yet surveyed investments, and the structural constraints on the investments. Once the agent stops conducting surveys, the agent simply hands to the principal a best set of investments that the agent surveyed which satisfy the structural constraints. 

To make a best possible investment, the principal has to design an effective contract to motivate the agent in this setting. However, the stochastic nature of the available information and the private sequential decision making of the agent makes this contract design problem challenging for the principal. We consider this contract design problem, and show when the principal can obtain a good approximation to an optimal linear contract or to compute such a contract exactly.

\subsection{Sequential Model Description}\label{sec:model_description}

Let us describe the principal-agent model where the agent takes actions sequentially, and where the set returned by the agent is subject to feasibility constraints given by a matroid. 

\subsubsection*{Publicly Known Information.} 
We consider the principal-agent setting for matroid independent set problem.
Let $\mathcal M = (E, \mathcal I)$ be a matroid with a ground set $E$ and a collection of independent sets $\mathcal I$. Here, the total number of possible actions is $n$, i.e.\  $|E|=n$. For every element $i \in E$ its value is a non-negative random variable with support on at most $m$ possible \emph{outcome values}. One can probe the value of the element~$i \in E$ by paying a cost~$c_i \geq 0$.  For each element $i\in E$, the following information is known by both the agent and the principal:
\begin{enumerate}[(i)]
    \item the distribution $X_i\sim \mathcal F_i$, i.e.\  the distribution of the non-negative outcome value $X_i$ for the element $i$. The random variables $X_i$, $i\in E$ are independent.
    \item the cost $c_i$ of probing the element $i$, i.e.\  the cost $c_i$ that the agent needs to pay to probe the outcome value $X_i$. 
\end{enumerate}
We assume that each distribution $\mathcal F_i$, $i\in E$ is given by an explicit list of values $v_{i,k}$ and probabilities $p_{i,k}$, $k=1,\ldots,m$. Here, for each $i\in E$ and $k=1,\ldots,m$, the variable $X_i$ takes value $v_{i,k}$ with probability $p_{i,k}$, i.e.\  $\Pr[X_i=v_{i,k}]=p_{i,k}$. Note that if for some $i\in E$ the support of $\mathcal F_i$ has less than $m$ values, we add an appropriate amount of values $v_{i,k}$ with $p_{i,k}=0$.

We assume that we access $\mathcal M$ through an independence oracle, i.e.\  for every subset of $E$ we can get an answer whether it is an independent set with respect to $\mathcal M$.

\subsubsection*{Decisions of Principal and Agent.}
The principal asks the agent to find an independent set $I \in \mathcal{I}$ such that all elements in $I$ have their value probed by the agent.  The principal's  reward $X$ equals the total value of the found independent set, i.e.\ $X=\sum_{i\in I}X_i$. To encourage the agent, the principal proposes a contract $t: \mathbb{R}_{\geq 0 } \rightarrow \mathbb{R}_{\geq 0 }$. The contract maps the total value of an independent set $I$ to a non-negative payment to the agent. In other words, if the agent returns the principal an independent set of total value $X$, the principal pays the agent $t(X)$. In this paper, we consider a particular class of contracts, so called \emph{linear contracts}. A \emph{linear contract} $t_\alpha: \mathbb{R}_{\geq 0 } \rightarrow \mathbb{R}_{\geq 0 }$ is defined by a parameter $\alpha \in [0,1]$ such that  $t_\alpha(X) := \alpha \cdot X$ for every value of $X$. In other words, the principal gives an $\alpha$ fraction of the reward as a payment to the agent.

After the principal reveals the contract, the agent starts to probe elements sequentially, i.e.\ the agent probes elements one after another until the agent decides to stop probing. So, at each timepoint the agent can select an element to probe or decide to halt. For each probed element $i$ the agent has to pay the corresponding cost $c_i$.  When the agent halts, the agent has a set of probed items $S$, and then the agent selects an independent set $I \subseteq S$, $I \in \mathcal{I}$ to return to the principal. As outlined above, if the total value of $I$ equals $X$ then the agent receives the payment $t(X)$ from the principal.

The agent's strategy $\pi$ depends on the contract $t$, the matroid $\mathcal M=(E,\mathcal I)$, costs $c_e$, $e\in E$ and distributions $\mathcal F_i$, $i\in E$.
Given that, at each timepoint the strategy $\pi$ decides whether the agent has to halt or not, and if not then which element to probe next. Note, that at any given timepoint such decisions of $\pi$ depend on the set of elements probed so far and on their values. The agent is allowed to follow a randomized strategy.

\subsubsection*{Utility of Principal and Agent for Given Agent's Strategy.}
Consider an agent's strategy $\pi$. Let $S(\pi)$ be the random variable representing the set of probed elements. Let $I(\pi)$ be the random variable representing the independent set returned by the agent.  Recall, that  $I(\pi)$ is an independent set in $\mathcal M$, and $I(\pi)$ is always a subset of $S(\pi)$. Let $U_A(t,\pi)$ be the \emph{expected utility of the agent} under the contract $t$ and strategy $\pi$. Thus, for linear contracts  we have
\[ U_A(t_\alpha,\pi) := \mathbb{E}\left[\alpha\left(\sum_{i \in I(\pi)}X_i\right) - \sum_{i \in S(\pi)}c_i\right]\,.\]
Let $U_P(t,\pi)$ denote \emph{principal's expected utility} under  the  contract $t$ and agent's strategy $\pi$. For linear contracts, we have
\[ U_P(t_\alpha,\pi) := \mathbb{E}\left[\sum_{i \in I(\pi)}X_i - \alpha\left(\sum_{i \in I(\pi)}X_i\right)\right]\,.\]

\subsubsection*{Utility of Principal and Agent for Best Agent's Strategy.}
Given a contract $t$, the agent is interested in a strategy that maximizes the agent's expected utility. In other words, the agent is interested in a strategy $\pi^*(t)$ that is in the set $\argmax_{\pi} U_A(t,\pi)$. An agent's best response is a strategy $\pi^*(t)$ that maximizes the utility $U_P(t, \pi)$ subject to the constraint $\pi^*(t)\in \argmax_{\pi} U_A(t,\pi)$. In other words,  the agent selects a strategy that maximizes their utility while breaking the ties in favor of principal's utility, which is a standard assumption. 

We assume that given a contract $t$, the agent is always responding using a best response strategy $\pi^*(t)$. Thus, given a contract $t$ we can speak about the principal's utility $U_P(t)$  under this contract \[U_P(t) := U_P(t,\pi^*(t))\,.\]
Note that when the contract $t$ is clear from the context, we write $\pi^*$  for a best response instead of $\pi^*(t)$.

\subsection{Our results}

A central problem that we study in the current work is determining an optimal linear contract in a sequential model when the combinatorial constraints are given by a matroid. We call this problem the \emph{Optimal Linear Contract Problem on Matroids}. In all of the results below when we speak about $\size(\cdot)$ we refer to the size of the binary encoding of the corresponding number or the corresponding collection of numbers.

\begin{problem}[\textbf{OLCPM: Optimal Linear Contract Problem on Matroids}]\label{OLCPM}
  Let us be given a matroid $\mathcal{M} = (E,\mathcal{I})$, where every  element $i\in E$ has a cost $c_i$ and  an independent discrete distribution $\mathcal F_i$ with at most $m$ outcomes. For the principal-agent setting in a sequential model, 
 determine a linear  contract achieving the maximum principal's utility.
\end{problem}

In this work, we establish a connection between the above problem and the next problem defined on matroids, where each element is present with some probability. The well known Two Terminal Network Reliability Problem \cite{valiant1979complexity} is a special case of the next problem, when one considers graphic matroids. For more information on variants and generalizations of \emph{Matroid (Un)reliability Problem} see~\cite{brylawski_oxley_1992}, Section 6.3.E. In~\cite{brylawski_oxley_1992}, the authors formulate a more general problem, i.e.\ a problem of determining the probability that the existing elements form a set from a specified family and examine the relationship to the Tutte polynomial.

\begin{problem}[\textbf{\rpm: Unreliability Problem on Matroids}]\label{RPM}
    Let us be given a matroid $\mathcal{M} = (E,\mathcal{I})$ and an element $e\in E$.  Let every element $i\in E\setminus\{e\}$  have an independent probability $p_i$ of existing. Let $T$ be the random variable indicating the set of existing elements in $E\setminus\{e\}$. Determine the probability\[\Pr[\rank_{\mathcal M} (T)+1=\rank_{\mathcal M} (T\cup \{e\})]\,.\]
    In other words, determine the probability that $e$ is not spanned by the existing elements $T$.
\end{problem}

Let us state our first results. They essentially show that one can efficiently find an optimal linear contract, or an approximately optimal linear contract, for an instance of OLCPM as long as one is able to efficiently find an exact, or an approximate, solution for related  instances of \rpm.

\begin{theorem}\label{thm:OLCPM->RPM}
An input instance of OLCPM for a matroid $\mathcal{M}$ can be solved in polynomial time and with a polynomial number of calls to an oracle solving \rpm on the matroid $\mathcal M$, where each instance of \rpm has size polynomial in the input OLCPM instance size. 
\end{theorem}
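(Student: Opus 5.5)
The plan is to first characterize the agent's best response for a fixed $\alpha$ through the Weitzman-style ``Pandora's box'' structure, which extends to matroids. For each element $i$ and contract $\alpha$, I define the reservation (index) value $\sigma_i(\alpha)$ as the solution of $\mathbb E[(\alpha X_i-\sigma)^+]=c_i$, that is, $\sigma_i(\alpha)=\alpha\,\tau_i$, where $\tau_i$ solves $\mathbb E[(X_i-\tau)^+]=c_i/\alpha$. I also set the capped values $\kappa_i(\alpha)=\min(\alpha X_i,\sigma_i(\alpha))$. By the known result for matroid Pandora's box (Singla; Kleinberg--Waggoner--Weyl), the optimal agent strategy is the greedy one: probe elements in decreasing order of $\sigma_i$ and skip any element spanned by already accepted elements. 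Its expected agent utility equals the expected max-weight basis with respect to the weights $\kappa_i^+$. In particular, the returned set $I(\pi^*)$ is the greedy max-weight independent set under the weights $\kappa_i$, with ties broken in the principal's favour.

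The principal's utility is then $(1-\alpha)\,\mathbb E[\sum_{i\in I(\pi^*)}X_i]$. By linearity this equals $(1-\alpha)\sum_i\sum_k p_{i,k}v_{i,k}\Pr[i\in I \mid X_i=v_{i,k}]$. Given $X_i=v_{i,k}$, element $i$ enters the greedy solution exactly when it is not spanned by the set $T$ of elements $j\ne i$ whose capped value exceeds that of $i$ (with consistent tie-breaking). Since the $\kappa_j$ are independent, each $j$ ``exists'' independently with probability $p_j=\Pr[\kappa_j(\alpha)\succ \kappa_i(\alpha)]$, where the comparison is made at the fixed threshold $\min(\alpha v_{i,k},\sigma_i(\alpha))$. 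This probability is precisely an instance of \rpm on $\mathcal M$ with distinguished element $e=i$, and its size is polynomial in the input. Elements with $\sigma_i\le0$, which are never probed, get probability $0$.

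It remains to handle the optimisation over $\alpha$. The relative order of the quantities $\sigma_j(\alpha)$ and $\alpha v_{j,k}$ is what determines every probability $p_j$. Because $\tau_i(\alpha)$ is piecewise given by $\tau=\bigl(\sum_{k:v_{i,k}>\tau}p_{i,k}v_{i,k}-c_i/\alpha\bigr)/\sum_{k:v_{i,k}>\tau}p_{i,k}$, each comparison $\alpha\tau_i$ versus $\alpha\tau_j$ or $\alpha v_{j,k}$ reduces to comparing functions of the form $a-b/\alpha$. Two such functions cross at most once. This yields polynomially many breakpoints (in $n$ and $m$), computable in polynomial time, and they split $[0,1]$ into intervals on which the combinatorial order, and hence every $p_j$, is fixed. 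On each interval, $\mathbb E[\sum_{i\in I}X_i]$ is a constant $R$ obtained from $O(nm)$ \rpm calls, so the principal's utility $(1-\alpha)R$ is maximised at the left endpoint of the interval. At that endpoint the agent is indifferent, and the favourable tie-breaking means the closure of the interval gives the value attained at the endpoint. So I evaluate the utility at all breakpoints, using the tie-break order matching the interval to the right, and return the best one.

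The main obstacle I expect is the tie-breaking at breakpoints. The agent's best response must favour the principal, and I need to argue that the supremum over an interval is actually attained at its left endpoint by some optimal agent strategy, namely greedy with ties resolved as in the adjacent interval. I also need the encoding sizes of the breakpoints and of the probabilities $p_j$ to be polynomial. Those probabilities are sums of the $p_{i,k}$, so they are fine. The breakpoints are roots of linear equations in $1/\alpha$ with coefficients taken from the input, so they are fine as well.
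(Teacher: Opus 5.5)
Your overall route is the paper's: grades $\sigma_i=\tau_i(\alpha)$ and surrogates $\kappa_i=Y_i(\alpha)$, FRUGAL viewed as greedy on the $Y_i$, polynomially many breakpoints with evaluation at left endpoints, and for each $(i,k)$ a \rpm instance with $e=i$. The genuine gap is the step you defer as ``the main obstacle''. This tie-breaking issue is exactly what the paper resolves in Lemma~\ref{lem:best_response_max_cost} and Theorem~\ref{thm:perturbed_cost_best_response}.

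Your justification (``the agent is indifferent \ldots\ the closure of the interval gives the value attained at the endpoint'') only shows that the right-interval greedy $\pi_r$ is \emph{one} agent-optimal strategy at the breakpoint $\gamma$. That gives only $U_P(t_\gamma)\ge(1-\gamma)R$. You need equality. If another agent-optimal strategy at $\gamma$ had larger expected reward, the agent would pick it, and your computed value would underestimate $U_P(t_\gamma)$. The maximum over breakpoints could then be attained at the wrong contract.

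The paper closes this in two steps. First, a best response maximizes $\mathbb E[\sum_{i\in S}c_i]$ among agent-optimal strategies, because $U_P=\frac{1-\alpha}{\alpha}(\beta+\mathbb E[\sum_{i\in S}c_i])$. Second, FRUGAL run with costs $(1-\varepsilon)c_i$ produces such a strategy. The perturbed grade at $\alpha$ equals $(1-\varepsilon)\tau_i(\alpha/(1-\varepsilon))$, so that perturbation is precisely your ``order of the interval to the right''.

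In your own framework the missing argument is short:
\begin{itemize}
\item $R(\pi)=\mathbb E[\sum_{i\in I}X_i]$ and $C(\pi)$ do not depend on $\alpha$, and $U_A(t_\alpha,\pi)=\alpha R(\pi)-C(\pi)$.
\item $\pi_r$ is agent-optimal on $(\gamma,\gamma')$. By continuity of $\alpha\mapsto\mathbb E[\max_{J\in\mathcal I}\sum_{j\in J}Y_j(\alpha)]$, it is also agent-optimal at $\gamma$.
\item Take any $\pi$ agent-optimal at $\gamma$ and any $\alpha\in(\gamma,\gamma')$. Subtracting the equality at $\gamma$ from the inequality at $\alpha$ gives $(\alpha-\gamma)(R(\pi)-R(\pi_r))\le 0$.
\item Hence $R(\pi)\le R(\pi_r)$, and therefore $U_P(t_\gamma)=(1-\gamma)R(\pi_r)$.
\end{itemize}

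There is a second, smaller omission. The sign of $\sigma_j(\alpha)$ decides whether $j$ can be probed at all. That sign is not determined by the relative order of the $\sigma$'s and the $\alpha v_{j,k}$'s, so you must add the zeros of each $\sigma_j$ to your breakpoints. Consistently with the right-interval rule, an element with $\sigma_j(\gamma)=0$ must also be treated as probe-able (FRUGAL considers elements with $v_j\ge 0$), not assigned probability $0$. For example, take a single element with $X\equiv 1$ and cost $c\in(0,1)$. Then $\sigma(\alpha)=\alpha-c$ never meets $\alpha\cdot 1$, so your list has no breakpoint at $c$ and you would report utility $0$. The contract $\alpha=c$ actually yields $1-c$.
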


\begin{theorem}\label{thm:OLCPM->RPM approx}
Let $\psi$ be a positive number. An input instance of OLCPM for a matroid $\mathcal{M}$ can be approximated up to the factor $(1+\psi)$  in polynomial time and with a polynomial number of calls to an oracle solving \rpm on the matroid $\mathcal M$ up to the approximation factor $(1+\psi)$, where each instance of \rpm has polynomial size in the input OLCPM instance size.
\end{theorem}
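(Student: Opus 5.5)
The plan is to reuse the reduction behind Theorem~\ref{thm:OLCPM->RPM} and check that every quantity the oracle supplies enters the principal's utility monotonically, so that multiplicative errors pass through. First, the agent's best response to $t_\alpha$. Facing values $\alpha X_i$ and costs $c_i$, the agent solves a Pandora's box problem on the matroid $\mathcal M$. For each $i$ define the reservation value $\sigma_i(\alpha)$ by $\mathbb E[(\alpha X_i-\sigma_i(\alpha))^+]=c_i$, and set $\kappa_i=\min(\alpha X_i,\sigma_i(\alpha))$. By the known matroid version of Weitzman's index policy, an optimal strategy does the following: it probes elements in decreasing order of $\sigma_i(\alpha)$, it accepts a probed element greedily whenever it is not spanned by the accepted ones, and it stops once the next index is not larger than what could still be added. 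The resulting returned set $I(\pi^*)$ is exactly the greedy max-weight independent set with respect to the weights $\kappa_i$. Among optimal strategies, the principal-favourable tie-breaking fixes a consistent order on equal $\kappa$-values. I take this order to rank larger true $X_i$ first, and will verify that it is indeed a best response.

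Second, I write $U_P(t_\alpha)$ through \rpm. We have $U_P(t_\alpha)=(1-\alpha)\sum_{i}\sum_k p_{i,k}v_{i,k}\Pr[i\in I(\pi^*)\mid X_i=v_{i,k}]$. Given $X_i=v_{i,k}$, the element $i$ is selected exactly when it is not spanned by $T=\{j\ne i:\kappa_j\succ \min(\alpha v_{i,k},\sigma_i(\alpha))\}$, where $\succ$ is the tie-broken order. Each $j$ lies in $T$ independently, with probability $q_j=\Pr[\kappa_j\succ\cdot]$, and $q_j$ is computable from $\mathcal F_j$. This gives one \rpm instance on $\mathcal M$ with distinguished element $e=i$. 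Its size is polynomial, because the $\sigma_j(\alpha)$ are piecewise-linear solutions and hence rational of polynomial size. So each $U_P(t_\alpha)$ is a nonnegative combination of $nm$ \rpm values.

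Third, I reduce to finitely many candidate contracts. The selection probabilities depend only on the relative order of the numbers $\alpha v_{j,k}$ and $\sigma_j(\alpha)$. Since $\sigma_j(\alpha)/\alpha$ solves $\mathbb E[(X_j-s)^+]=c_j/\alpha$, this order changes only at polynomially many breakpoints $\alpha$. These are given by equations $\alpha v_{i,k}=\sigma_j(\alpha)$, i.e.\ $\alpha=c_j/\mathbb E[(X_j-v_{i,k})^+]$, and by $\sigma_i(\alpha)=\sigma_j(\alpha)$, where both sides are explicit piecewise functions with $O(m)$ pieces. Between consecutive breakpoints the expected reward is constant, so $U_P$ is $(1-\alpha)$ times a constant and is maximised at the left endpoint. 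With principal-favourable tie-breaking, the breakpoint itself attains the supremum. Hence an optimal $\alpha$ lies in a polynomial candidate set $A$, together with $\alpha=0$ and $\alpha=1$.

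Finally, the approximation. For each $\alpha\in A$, compute $\widetilde U(\alpha)$ by replacing every \rpm value with the oracle's $(1+\psi)$-approximation. The coefficients $(1-\alpha)p_{i,k}v_{i,k}$ are nonnegative, so $\widetilde U(\alpha)$ approximates $U_P(t_\alpha)$ within the same factor $(1+\psi)$. Outputting $\max_{\alpha\in A}\widetilde U(\alpha)$ therefore approximates the optimal principal's utility within $(1+\psi)$. The corresponding $\hat\alpha$ is near-optimal as well, with exactly factor $(1+\psi)$ if the oracle errs one-sidedly and otherwise after rescaling $\psi$.

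I expect the main obstacle to be the tie-handling step. I must show that the principal-favourable best response really coincides with the greedy rule under the chosen order $\succ$, including ties among indices and the stopping rule. Only then are the \rpm instances exact and $U_P$ upper semicontinuous from the right at breakpoints, which is what justifies restricting attention to $A$.
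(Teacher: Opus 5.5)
Your architecture matches the paper's. The agent's response comes from the index (FRUGAL) policy, and $U_P(t_\alpha)=(1-\alpha)\sum_{i,k}p_{i,k}v_{i,k}\Pr[i\in I\mid X_i=v_{i,k}]$. Each conditional probability is an \rpm instance on $\mathcal M$ with $i$ as the special element, only polynomially many breakpoints need checking, and the oracle's $(1+\psi)$ error passes through the nonnegative combination.

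The gap is the step you flag yourself, and the rule you commit to there is false. Ranking equal $\kappa$-values by larger true $X_i$ does not describe a best response, and in general not even an agent-optimal strategy. Take a rank-one matroid on two parallel elements $i,j$ with identical distributions $X\in\{4,8\}$ (each with probability $1/2$), costs $c_i=c_j=2$, and $\alpha=1/2$.
\begin{itemize}
\item Then $\sigma_i=\sigma_j=1$ and $\kappa_i=\kappa_j=1$ always. Your rule returns the element with larger $X$ and predicts $U_P=\tfrac12\mathbb E[\max(X_i,X_j)]=3.5$.
\item After the first probe the agent already holds $\alpha X\ge 2$. Probing the second gains at most $\mathbb E[(\alpha X_j-2)^+]-c_j=1-2<0$, so every agent-optimal strategy stops after one probe. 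The true value is $U_P=\tfrac12\mathbb E[X]=3$.
\end{itemize}
Such ties, with both elements capped at a common grade, occur with positive probability whenever two grades coincide. That happens on whole intervals for identical elements, and at every breakpoint where two grades cross, i.e.\ exactly at your candidate contracts. There the principal-favourable choice concerns probing order and is governed by costs, not realized values. For a parallel pair with common grade $\sigma$, probing $i$ first is better for the principal iff $c_i/\Pr[\alpha X_i>\sigma]\ge c_j/\Pr[\alpha X_j>\sigma]$. Likewise, at $\sigma_i=0$ the principal wants the agent to keep probing, contrary to your rule to stop once the next index is not larger.

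The missing idea is the paper's treatment of ties, in two parts.
\begin{itemize}
\item For every agent-optimal $\pi$ one has $U_P(t_\alpha,\pi)=\frac{1-\alpha}{\alpha}\bigl(\beta+\mathbb E[\sum_{i\in S(\pi)}c_i]\bigr)$, where $\beta$ is the agent's optimal utility. So a best response is an agent-optimal strategy maximizing expected probing cost.
\item Run FRUGAL with costs $(1-\varepsilon)c_i$, for $\varepsilon$ small enough (e.g.\ the smallest nonzero $p_{i,k}$). This maximizes $U_A+\varepsilon\,\mathbb E[\text{cost}]$ while remaining $U_A$-optimal, hence it is a best response.
\end{itemize}
The grades, critical values and \rpm instances are then all built from the perturbed costs, and residual ties can be broken by index as in the paper's order $\succ$. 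With that replacement the rest of your plan goes through. Without it, your \rpm instances compute the utility of a strategy the agent does not play, so the output need not be within $(1+\psi)$ of the optimum.
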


Now let us state the analogous results for reducing \rpm instances to OLCPM. 

\begin{theorem}\label{thm:OLCPM<-RPM}
An input instance of \rpm for a matroid $\mathcal{M}$ can be solved in polynomial time and with a polynomial number  of calls to an oracle solving OLCPM on some matroids $\mathcal M'$, where each instance of OLCPM has polynomial size in the input \rpm instance size. Here, each matroid $\mathcal M'$ is obtained from $\mathcal M$ by introducing parallel copies for some of the elements in $\mathcal M$.
\end{theorem}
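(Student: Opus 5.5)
The plan is to encode the probabilities $p_i$ of the \rpm instance as the outcome distributions of cheap "background" elements. We then read off $q:=\Pr[\rank_{\mathcal M}(T)+1=\rank_{\mathcal M}(T\cup\{e\})]$ from the location of the optimal linear contract in a family of OLCPM instances. We use a binary search on a free parameter and recover $q$ exactly from its bit complexity. Since $q=\sum_{T}\prod_{i\in T}p_i\prod_{i\notin T}(1-p_i)$, it is a rational number whose denominator divides the product of the denominators of the $p_i$. Its bit size is therefore polynomial in the input, so it suffices to determine $q$ to within an additive $2^{-L}$ for a polynomial $L$ and then round, for instance via continued fractions with the known denominator bound.

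\textbf{Step 1: agent's behaviour.} First I would establish, or recall, the structure of the agent's best response under $t_\alpha$. The agent faces a matroid Pandora's box problem with values $\alpha X_i$ and costs $c_i$. It is optimal to probe in decreasing order of reservation values $\sigma_i(\alpha)$, defined by $\mathbb E[(\alpha X_i-\sigma_i)^+]=c_i$. An element is probed only if it can still improve the greedy independent set built on the capped values $\min(\alpha X_i,\sigma_i)$.

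\textbf{Step 2: the construction.} For every $i\neq e$, let $X_i=\varepsilon$ with probability $p_i$ and $X_i=0$ otherwise, with cost $c_i=0$. Ties are broken in the principal's favour, so these elements are always probed first, and the set of elements of value $\varepsilon$ is exactly $T$. The element $e$, together with one or two parallel copies $e',e''$ (this is where $\mathcal M'$ arises), gets values of order $1$ and positive costs. The costs are chosen so that these copies have the smallest reservation values and are probed only after all of $T$ is revealed. Any such copy is useful exactly when $e$ is not spanned by $T$, which happens with probability $q$.

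The parallel copies create two "regimes":
\begin{itemize}
\item for small $\alpha$ (around a threshold $\alpha_1$), only a cheap copy with low value $a$ is worth probing;
\item above a higher threshold $\alpha_2$, an expensive copy with value $b>a$ is probed instead.
\end{itemize}
The principal's utility near these two candidate optima is then approximately $(1-\alpha_1)qa$ and $(1-\alpha_2)qb+\theta$, up to the background term $(1-\alpha)\varepsilon\,\mathbb E[\rank(T)]$. Here $\theta$ is a tunable constant. One option is to realise it by a further, independent element in a direct-sum component, probed regardless of $T$, whose probing threshold is tied to $\alpha_2$ only. Which candidate is optimal then reveals whether $q$ is above or below a threshold that is an explicit function of $\theta$. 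Binary search over $\theta$, with polynomially many oracle calls, locates $q$ up to $2^{-L}$.

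\textbf{Main obstacle.} The hardest step is controlling the background term and making the regime comparison clean. The term $(1-\alpha)\varepsilon\,\mathbb E[\rank(T)]$ depends on $\alpha$ and involves the unknown $\mathbb E[\rank(T)]$. I would choose $\varepsilon=2^{-\mathrm{poly}}$, so that this term perturbs the utility gap by less than the binary-search resolution while all numbers keep polynomial size. One must also check that no other $\alpha$, such as $\alpha=0$ or breakpoints where the order of reservation values changes, becomes optimal. The fallback is to choose the costs so that the principal's utility, as a function of $\alpha$, is piecewise of the form $(1-\alpha)\cdot(\text{const})$ between consecutive reservation-value breakpoints. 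Then only breakpoints can be optimal, and there are only the few breakpoints we designed.
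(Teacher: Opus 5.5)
Your proposal has a genuine gap. The central claim in Step~2 is that a copy of $e$ ``is useful exactly when $e$ is not spanned by $T$''. This is false for the weights you chose, and the reduction rests on it.

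The elements of $T$ have value $\varepsilon$ and, being free, surrogate $Y_i=\alpha\varepsilon$. The copies of $e$ have values of order $1$. A heavy element spanned by $T$ is still worth taking, since it can replace a weight-$\alpha\varepsilon$ element of the circuit it closes. Concretely, once the free elements are probed, FRUGAL compares a copy's grade (your reservation value) $\tau_{e'}(\alpha)$ with the surrogates $\alpha\varepsilon$. If $\tau_{e'}(\alpha)>\alpha\varepsilon$, the copy is probed before anything in $T$ is accepted. For a deterministic value $a$, it is then accepted regardless of $T$.

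So the $q$-sensitive regime is only $\{\alpha: 0\le\tau_{e'}(\alpha)\le\alpha\varepsilon\}$. For value $a$ and cost $c_a$ this is the window $[c_a/a,\,c_a/(a-\varepsilon)]$. Just to its right, at a breakpoint your fallback does not account for, the copy is always accepted. There the principal gets about $(1-\alpha)a$ rather than $(1-\alpha)qa$. Hence your candidate contracts at $\alpha_1,\alpha_2$ are beaten by contracts whose utility does not depend on $q$.

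Suppose instead you keep $\tau_{e'}\le\alpha\varepsilon$ up to $\alpha=1$. Then $\mathbb{E}[X_{e'}]-c_{e'}\le\varepsilon$, and the $q$-dependent part of the utility is at most $\varepsilon q$. That is the same order as the unknown $\varepsilon\,\mathbb{E}[\rank(T)]$ the principal already gets at $\alpha=0$. Shrinking $\varepsilon$ rescales both, so it cannot separate them.

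The missing idea, on which Lemma~\ref{lem:RPM_reduction} is built, is to make the background elements \emph{heavy but rare}. Each $i\neq e$ gets a huge value: $1/\varepsilon^{i-1}$, or $\beta/p_1$ for the free element $1$. The costs make the grade lines of the costly elements all pass through $(1,1)$, with $\tau_e$ the steepest. Then on the whole interval $[1-\xi,1)$, every present background element has surrogate at least $\tau_e(\alpha)$ and is accepted before $e$ is considered. So $e$ is probed exactly when it is not spanned, and there is no $q$-independent contract to the right.

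The background is made negligible through small probabilities $p_i\le\delta$ rather than small values. This is where the parallel copies come from: Lemma~\ref{lem:RPM_cleanup} splits each $i\neq e$, not $e$. The competitor at $\alpha=0$ then has the known utility $\beta$ instead of your unknown background term, and binary search on $\beta\in\{z/\Lambda\}$ recovers $\rho$.

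Two further remarks. First, realizing $\theta$ by an element in a new direct-sum component gives a matroid that is not a parallel extension of $\mathcal M$, which the theorem requires; the paper puts the tunable quantity on an original element of $E\setminus\{e\}$. Second, if you follow the paper's route, note that splitting $p_i$ into copies of probability at most $\delta$ needs about $\delta^{-1}\ln\frac{1}{1-p_i}$ copies. The paper's condition on $\delta$ forces $\delta<\beta\varepsilon^{n+1}$ with $\varepsilon=1/(4\Lambda)$, which is exponentially small. So the polynomial-size claim there needs more care than Lemma~\ref{lem:RPM_cleanup}(iii) gives it.
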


\begin{theorem}\label{thm:OLCPM<-RPM approx}
Let $\psi$  be a positive number.
An input instance of \rpm can be approximated up to the factor $(1+\psi)^2$ in polynomial time and with a polynomial number of calls to an oracle solving OLCPM on some matroids $\mathcal M'$ up to the approximation factor $(1+\psi)$, where each instance of OLCPM has polynomial size in the input \rpm instance size. Here, each $\mathcal M'$ is obtained from $\mathcal M$ by introducing parallel copies for some of the elements in $\mathcal M$.
\end{theorem}

Given these results let us briefly mention some of the known results on \rpm and their implications for OLCPM.
First, we would like to note that the \rpm on uniform matroids can be efficiently solved in a straightfoward way. Indeed, the answer is the sum of the coefficients infront of $x^0$, $x^1$, \ldots, $x^{\rank(\mathcal M)-1}$ of the polynomial $\Pi_{e\in E}\left((1-p_e)+p_e\cdot x\right)$, so it can be efficiently computed by iterating on $e\in E$.

For laminar matroids, there is a polynomial time algorithm for solving  \rpm~\cite{dehaan}. As a result, we have a polynomial time algorithm for optimal linear contracts in laminar matroids. For graphic matroids it is \#P-hard to solve the \rpm even for planar graphs~\cite{1986scott}, so our results show that it is also  \#P-hard to solve the problem OLCPM.

In the paper~\cite{1996karger}, Karger studies an FPRAS for a problem closely related to \rpm on graphic matroids. Given a graphic matroid $\mathcal M$ and given probabilities for each edge $e\in E$ to exist, Karger provides an FPRAS for the probability that the resulting network is disconnected. Thus,~\cite{1996karger} provides an FPRAS for the so called Network (un)Reliability Problem.  To get an FPRAS for \rpm on graphic matroids we need a stronger result though: given two vertices we need an FPRAS to estimate the probability that the two vertices are not connected in the resulting network, which leads us to Two Terminal Network (un)Reliability Problem. For this problem, the results in~\cite{1996karger} provide an FPRAS only for pairs of vertices that are separated by a cut with a value close to the minimum cut value in a certain graph. Recent work has shown this two terminal problem is in general $\#BIS$-Hard \cite{feng2024fpras}. The class of $\#BIS$-Hard problems is made of problems equivalent to approximating the number of independent sets in a bipartite graph, and is a rich class of approximate counting/sampling problems \cite{dyer2004relative}. Resolving the approximation complexity of $\#BIS$ is a major open problem.

Finally, we would like to emphasize that OLCPM, see Problem~\ref{OLCPM}, is formulated as finding an optimal linear contract. From our proofs, it follows that analogues to Theorem~\ref{thm:OLCPM->RPM}, Theorem~\ref{thm:OLCPM->RPM approx}, Theorem~\ref{thm:OLCPM<-RPM}, Theorem~\ref{thm:OLCPM<-RPM approx} hold even when one is required to determine the maximum expected principal's utility achievable by a linear contract.

Now let us state our results that provide an FPRAS for two special cases of OLCPM.
\begin{theorem}\label{thm:FPRAS_special_cases}
    OLCPM admits an FPRAS in each of the following cases:
    \begin{enumerate}[(a)]
    \item \label{thm:FPRAS_special_cases balanced} there exists $\omega$ of polynomial size with respect to the input size such that  \[\frac{\max_{k=1,\ldots m}v_{i,k}p_{i,k}}{\max_{k=1,\ldots m}v_{j,k}p_{j,k}}\leq \omega\,\]
    holds for all distinct $i$, $j\in E$ such that $\mathcal F_j$ has in support not only $0$;
    \item \label{thm:FPRAS_special_cases support two} $m=2$ and for each $i\in E$ we have $v_{i,1}=0$; and additionally the cardinality  of the set \[\{v_{i,k}\,:\, i\in E,\, k=1,\ldots, m\}
    \] is bounded by a constant.
    \end{enumerate}
    \end{theorem}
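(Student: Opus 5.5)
Via Theorem~\ref{thm:OLCPM->RPM approx}, it suffices to design an FPRAS for every \rpm instance that the reduction produces from an OLCPM instance in case (a) or (b). The natural estimator is Monte Carlo. Sample the random set $T$ of existing elements. Use the independence oracle to test whether $e$ is spanned by $T$, by comparing $\rank(T)$ with $\rank(T\cup\{e\})$ via greedy. Output the empirical frequency of ``not spanned''.

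By standard Chernoff bounds, $O(\psi^{-2}\log(1/\delta)/q)$ samples yield a $(1+\psi)$-approximation with probability $1-\delta$, where $q$ is the target probability. So the whole task is to show that in both special cases every \rpm instance queried has $q\ge 1/\mathrm{poly}$. Alternatively, instances with tiny $q$ must be irrelevant to the principal's utility up to a $(1+\psi)$ factor.

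To make this precise, I will first recall how the reduction of Theorem~\ref{thm:OLCPM->RPM} generates its \rpm instances. For a fixed $\alpha$, the agent's best response has an index/threshold structure, in the spirit of Weitzman's Pandora's box. The principal's utility then decomposes into a sum over elements $i$ and outcome values $v_{i,k}$. Each term is $v_{i,k}p_{i,k}$ times the probability that $i$, in its outcome $k$, is not spanned by the set of elements with higher index/value that are realized. The existence probabilities $p_j$ in the \rpm instance are tail probabilities of the $X_j$ that depend on $\alpha$.

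Case (a). The principal's utility is at least $(1-\alpha)$ times one term with probability $1$. For example, the top-indexed element is never spanned unless it is a loop. Hence the utility is at least $(1-\alpha)\max_{i,k} v_{i,k}p_{i,k}$ times the probability the agent probes it. Every individual term is at most $v_{i,k}p_{i,k}\le \omega\cdot\max_j\max_k v_{j,k}p_{j,k}$. Therefore any term whose \rpm probability is below $\psi/(n m\omega)$ contributes at most a $\psi$ fraction of the total. I will truncate such terms: if the Monte Carlo estimate after polynomially many samples is below this threshold, I set it to $0$. This gives additive error small relative to the optimum, hence a multiplicative $(1+\psi)$ guarantee. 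I also need to handle the search over candidate $\alpha$ (breakpoints), applying the same estimate at each candidate.

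Case (b). Each element is Bernoulli, taking value $w_i$ with probability $p_i$ and $0$ otherwise, and only $O(1)$ distinct values $w$ occur. I plan to group elements by value class. For a given class, the \rpm question becomes: is $e$ spanned by realized elements of strictly higher (or tied) classes? Since the number of classes is constant, the utility is a constant-length sum. Among the classes actually probed under $\alpha$, the top one yields a term bounded below by that class's contribution. Once again, terms with small probability can be dropped relative to the largest term, because the ratio between distinct values is a constant.

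The main obstacle is the lower-bound argument in case (b): $p_i$ may be exponentially small, so $v p$ ratios are unbounded. I would first try to show that within a single value class, the contribution is $w\cdot\mathbb E[\rank \text{ increment}]$. The probability for each $e$ can be tiny, but what matters is the sum over the class, and that sum equals an expected rank difference. Expected rank can be estimated with relative error by sampling, since rank is bounded by $n$ and is at least $1$ whenever nonzero. So I would estimate each class contribution directly as an expected rank, rather than estimating per-element \rpm probabilities.
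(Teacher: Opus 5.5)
\paragraph{Case (a).} Your approach here is essentially the paper's: run FRUGAL by Monte Carlo, and show that terms with acceptance probability below an inverse-polynomial threshold are negligible, using a lower bound on $U_P(t_\alpha)$ together with the balance condition.

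Your justification of that lower bound is off, though. The first-probed (highest-grade) element $j^*$ can be spanned when $\alpha X_{j^*}<\tau_{j^*}$. Also, a bound in terms of $\max_j\max_k v_{j,k}p_{j,k}$ is not what the argument yields.

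The correct statement is as follows. Consider the accepted element that first makes $j^*$ dependent. It was chosen while $j^*$ was still available with key $Y_{j^*}=\alpha X_{j^*}$, so its value is at least $X_{j^*}$. Hence the reward always dominates $X_{j^*}$, and $U_P(t_\alpha)\ge(1-\alpha)\max_k v_{j^*,k}p_{j^*,k}$. Condition (a) applied with $j=j^*$ then gives $(1-\alpha)v_{i,k}p_{i,k}\le\omega\,U_P(t_\alpha)$ for every term.

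\paragraph{Case (b): the gap.} In this case a rarely-accepted term need not be negligible relative to $U_P(t_\alpha)$ at the same $\alpha$. So neither of your two alternatives (all \rpm probabilities large, or small ones irrelevant) holds contract by contract.

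Here is an example. Take a rank-one matroid with $n-1$ elements of value $1$, each realized with probability $1/2$. Arrange the costs so that these are graded ahead of one element of value $2^{2n}$ realized with probability $1$. Then $U_P(t_\alpha)$ is dominated by $(1-\alpha)2^{2n}\cdot 2^{-(n-1)}$, which comes from an event of probability $2^{-(n-1)}$.

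Estimating each class contribution as an expected rank increment does not help. A quantity in $[0,n]$ that is at least $1$ whenever nonzero can still have exponentially small mean, and then polynomially many samples give no relative accuracy. Two further problems: FRUGAL orders elements by the grades $\tau_i(\alpha)=\alpha v_i-c_i/p_i$, not by value class; and a constant number of distinct values says nothing about their ratios.

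\paragraph{The missing idea.} You need to compare across contracts instead of estimating every $U_P(t_\alpha)$ accurately.

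Lemma~\ref{lem:change_critical_value} shows the following: if some element with $\tau_i(\alpha)\ge 0$ has $(1-\alpha)v_ip_i\ge U$, then some other contract achieves at least $U/(2n)^T$. The proof follows the chain of elements that block $i$. Their values strictly decrease, which is where the single nonzero outcome and the constant $T$ enter, and one loses a factor $2n$ per step. Hence every ``bad'' $\alpha$, where some such term is large compared with $U_P(t_\alpha)$, is beaten by a polynomial factor.

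The two kinds of contract are then handled separately:
\begin{itemize}
\item At ``good'' $\alpha$, the sampling estimate is accurate by Lemma~\ref{lem:ignoring}.
\item At bad $\alpha$, Markov's inequality (Lemma~\ref{lem:undersampling}) only ensures the estimate does not exceed $U_P(t_\alpha)$ by more than a polynomial factor.
\end{itemize}
This is enough to guarantee that the contract maximizing the estimates is near-optimal.
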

    The case~\eqref{thm:FPRAS_special_cases balanced} of Theorem~\ref{thm:FPRAS_special_cases} covers such important cases as identical distributions, distributions where nonzero supports are balanced, etc. Also the case~\eqref{thm:FPRAS_special_cases support two} is one of the most natural cases, and at the first look one might expect to be able to obtain an FPRAS in this case. As we see later in Remark~\ref{rem:suport_three reduction}, the condition $m=2$ can not be extended to $m=3$ in the case~\eqref{thm:FPRAS_special_cases support two} without providing an FPRAS for the corresponding \rpm instances.

\subsection{Our Techniques}

Our results on reducing OLCPM to \rpm build on the ideas from~\cite{ezra2024contract} and~\cite{singla2018price}. Generally, in OLCPM, the FRUGAL algorithm from~\cite{singla2018price} allows us to compute the policy with which the agent would respond to a linear contract. In other words, the agent can use the FRUGAL algorithm from~\cite{singla2018price} to arrive at their optimal policy.  The main technical obstacle here is due to the tie breaking on the agent's side. Indeed, the agent has to select the policy that is most beneficial for the principal when there are several optimal agent's policies. To handle this issue we show that the required tie breaking forces the agent to select the policy with the largest cost whenever there are several optimal agent's policies. After establishing this, we handle tie breaking by introducing sufficiently small perturbations on the probing costs. After that we rely on the strategies analogous to the ones in~\cite{ezra2024contract} and~\cite{singla2018price}, to identify all critical values, and so to identify polynomially many linear contracts among which there is an optimal linear contract. 

To reduce OLCPM to \rpm, we carefully construct an instance of OLCPM that has only two candidates for an optimal linear contract. The construction is based on the interplay between the actions' costs and the actions' rewards. Roughly speaking, this interplay guarantees that if the principal wants to ``motivate" the agent to take one more action then the principal's utility from the previously taken actions is going to be enormously reduced by the increase in the transfer from the principal to the agent. Having such a construction, we guarantee that the principal either needs to ``motivate" the agent to take only one (the cheapest) action or has to ``motivate" the agent to potentially take actions until the very last action with a high expected value. This dichotomy allows us to reduce OLCPM to \rpm.

In the two special cases of OLCPM settings that we consider we provide an FPRAS. For this we need to show that for each linear contract we have an FPRAS for computing the expected principal's utility under this contract. In the case where all nonzero outcomes have ``roughly" the same expected value we show that a simple Monte Carlo algorithm leads to an FPRAS for computing the principal's utility under a given linear contract. In the case when each action has only one possible nonzero outcome and the total number of different outcomes over all actions is constant,  we show that if the simple Monte Carlo algorithm does not lead to an FPRAS then the considered linear contract is not optimal. This allows us to ``approximately" compare expected principal's utility under an optimal linear contract and under non-optimal linear contracts, and at the end to efficiently output an ``approximately" optimal linear contract. We also show that there are clear obstacles to extend this result to the case of two nonzero outcomes per action. For this, we modify our reduction from \rpm to OLCPM in such a way that the resulting OLCPM instances have only two nonzero outcomes per action and the total number of different outcomes over all actions is at most five.

\subsection{Related Work}
\subsubsection*{Sequential Contracts}
Our results build on previous work in the sequential contract setting \cite{ezra2024contract}. In \cite{ezra2024contract}, the authors introduced sequential contracts and considered the model where the agent returns to the principal the value of a single element, i.e.\ the value of a  single action, as a reward. Thus, the structural constraints in~\cite{ezra2024contract} correspond to the uniform matroid of rank one.  In~\cite{ezra2024contract}, the authors gave an algorithm for finding optimal linear contracts, and an algorithm for optimal general contracts when the number of outcomes is constant. Here, the number of outcomes refers to the number of potential values elements can take, i.e.\ the number of potential values of agent's actions. They complemented these positive results with hardness results for the case when the values of elements, i.e.\ values of actions, are allowed to be correlated. In particular, they showed the $\mathcal{NP}$-hardness of approximating an optimal linear contract and an optimal general contract  within any constant. Another line of work on sequential contracts appears in~\cite{hoefer2024contractdesignpandorasbox}. In their setting, the outcome of each action corresponds to a reward for the agent and a reward for the principal. The principal observes the reward and sees what action led to this award, but does not see what other actions the agent took. The principal  designs a contract that indicates how much money the principal transfers to the agent as a function of the received reward and the corresponding action. In~\cite{hoefer2024contractdesignpandorasbox}, the authors provide an exact algorithm for finding an optimal linear contract, and additionally provide an algorithm for finding an optimal general contract in certain special cases.

\subsubsection*{Pandora's Box and the Price of Information}
As observed in~\cite{ezra2024contract} and~\cite{hoefer2024contractdesignpandorasbox} that initiated the study of sequential contracts, Pandora's box problems~\cite{weitzman1978optimal} are deeply connected to the optimal agent's policies subject to a sequential contract. Indeed, \cite{weitzman1978optimal} described a costly search problem where the agent can take a sequence of actions while paying the cost for each of these actions and keeping the largest action value. Due to this definition, the Pandora's box problem introduced in~\cite{weitzman1978optimal}  described exactly the problem of finding the optimal agent's policy in \cite{ezra2024contract}. Pandora's box problems have been studied in further more general settings, e.g. \cite{singla2018price} considered the case when the agent can keep the values of several actions as long as the set of these actions satisfies some combinatorial constraints. \cite{singla2018price} gave an algorithm to solve the Pandora's box problem for the case when the combinatorial constraint corresponds to independence with respect to a matroid. This later setting corresponds to the problem of finding an optimal agent's policy in our setting. However, computing the agent's optimal reward turns out to be hard, to show this we establish connections to (un)reliability problems~\cite{provan1986complexity,valiant1979complexity,feng2024fpras,cen2024beyond}.

\subsubsection*{ Combinatorial Contract Theory}
Combinatorial structures naturally arise in contract design theory whenever an agent can take several actions, or there are several agents. Our setting is related to recent works in non-sequential combinatorial contract theory. \cite{babaioff2006combinatorial} considered combinatorial functions in the multi-agent setting and gave an algorithm to compute the optimal contract for AND networks. It was then observed by \cite{emek2012computing} that computing the optimal contract for OR networks was $\mathcal{NP}$-hard. There has been a recent surge of interest in contract theory problems, single agent combinatorial reward functions \cite{dutting2022combinatorial,dutting2024combinatorial} and multi-agent combinatorial reward functions\cite{dutting2023multi,deo2024supermodular,duetting2024multi,castiglioni2023multi,pashkovich2024linearcontractssupermodularfunctions}. Many of these results considered obtaining approximately optimal contracts, since obtaining optimal contracts turned out to be hard in many settings. The hardness of finding and approximating optimal contracts led to a series of interesting results~\cite{ezra2024approximability,dutting2024querycomplexitycontracts}.

An expansive number of different settings for contract theory have been explored in recent years, as contracts recently caught the eye of many researchers in the theoretical computer science community. As a small list, other recent models included contracts with inspections \cite{ezra2024contracts}, ambiguous contracts \cite{dutting2023ambiguous}, hiring for an uncertain task \cite{castiglioni2024hiringuncertaintaskjoint}, and agents choosing from menus of contracts \cite{guruganesh2034contracts, bernasconi2024agent}.

\subsection*{Acknowledgment} The research was supported in part by Discovery Grants RGPIN-2020-04346 and DGECR-2020-00265 from the Natural Sciences and Engineering Research
Council (NSERC) of Canada.

\section{Grade, Surrogate and FRUGAL Algorithm}\label{sec:grade_surrgoate}
Given  a linear contract $t_\alpha$, for each action $i\in E$ we define the value  \emph{grade} $\tau_i(\alpha)$ and the random variable \emph{surrogate} $Y_i(\alpha)$
following the terminology in~\cite{singla2018price}. Note that both $\tau_i(\alpha)$ and $Y_i(\alpha)$ depend on the linear contract~$t_\alpha$. 
\begin{definition}
Given a linear contract $t_\alpha$ and a random variable $X_i\sim \mathcal F_i$, we define the grade $\tau_i(\alpha)$ to be the unique solution to the equation \[\mathbb{E}[(\alpha X_i - \tau_i(\alpha))^+] = c_i\,,\] 
where $(X)^+$ denotes $\max(X,0)$.
Note if $c_i = 0$ the solution to the equation is not unique, and in this case we define $\tau_i(\alpha) := +\infty$. 
\end{definition}
Note that according to the above definition the value $\tau_i(\alpha)$  does not have to be nonnegative, and indeed $\tau_i(\alpha)$ is allowed to take negative values.

\begin{definition}
Given a linear contract $t_\alpha$ and a random variable $X_i\sim \mathcal F_i$, we define the surrogate $Y_i(\alpha) := \min\{\alpha X_i,\tau_i(\alpha)\}$. 
\end{definition}
Note, that for each action $i\in E$ the surrogate value $Y_i(\alpha)$ is revealed only when the value $X_i$ is revealed, i.e.\ only when $i$ is probed. When the value of $\alpha$ defining the contract $t_\alpha$ is clear from the context, we refer to $\tau_i(\alpha)$ and $Y_i(\alpha)$ as $\tau_i$ and $Y_i$.

\begin{algorithm} 
\caption{Frugal Algorithm, FRUGAL }
\label{ALG:FRUGAL}
\begin{algorithmic}[1] 
\State Start with independent set $I\leftarrow\varnothing$, probed set $S\leftarrow\varnothing$ and  $v_i\leftarrow 0$ for all elements $i\in E$.
\While{$\{i\in E\,:\,i\not \in I,\, \{i\}\cup I\, \in \mathcal I,\, v_i\geq  0  \}\neq \varnothing$}
\State For each element $i\not\in I$
\[v_i\leftarrow\begin{cases}
    Y_i(\alpha)&\text{if } i \in S\\
    \tau_i(\alpha)&\text{if }i\not \in S
\end{cases}\]
    \State Consider an element $j \in \argmax_i\{v_i\,:\, i\not \in I,\, \{i\}\cup I\, \in \mathcal I,\, v_i \geq 0 \}$ 
    \If{$j\in S$}
    \State Set $I\leftarrow I\cup \{j\}$, $v_j\leftarrow 0$
        \Else{} Probe $j$ and set $S\leftarrow S\cup\{j\}$
            \If{$X_j \geq \tau_j(\alpha)$}
                \State Set $I\leftarrow I\cup \{j\}$ and $v_j\leftarrow 0$.
            \EndIf
    \EndIf
\EndWhile
\end{algorithmic}
\end{algorithm}

The results from~\cite{singla2018price} imply that given a linear contract $t_\alpha$  the above FRUGAL algorithm, see Algorithm~\ref{ALG:FRUGAL}, maximizes the expected agent's utility. In other words, the algorithm FRUGAL probes the elements $S$ and returns the independent set $I$ such that the expected agent's utility is maximized.

\begin{prop}\label{prop:max_agent_utility}
    Given a linear contract $t_\alpha$, Algorithm~\ref{ALG:FRUGAL} maximizes the expected agent's utility.
    Moreover, the maximum expected agent's utility is equal to $\mathbb{E}[\max_{J\in \mathcal I} \sum_{j\in J}Y_j]$, i.e.\  the maximum expected agent's utility is equal to the expected maximum weight of an independent set in $\mathcal M$ subject to the weights $Y_j$, $j\in E$. 
\end{prop}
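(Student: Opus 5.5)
This follows the Singla (price of information) argument, specialized to the scaled values $\alpha X_i$; the agent faces exactly that problem with rewards $\alpha X_i$ and costs $c_i$.

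\emph{Upper bound.} Fix any (possibly randomized) agent strategy $\pi$ with probed set $S=S(\pi)$ and returned set $I=I(\pi)\subseteq S$, $I\in\mathcal I$. For each $i$ with $c_i>0$, the definition of $\tau_i$ gives $c_i=\mathbb{E}[(\alpha X_i-\tau_i)^+]$. The decision to probe $i$ depends only on the values of previously probed elements, and $X_i$ is independent of those. Hence
\[
\mathbb{E}\Big[\sum_{i\in S}c_i\Big]=\mathbb{E}\Big[\sum_{i\in S}(\alpha X_i-\tau_i)^+\Big].
\]
For elements with $c_i=0$ we have $\tau_i=+\infty$, so both sides contribute zero. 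Therefore
\[
U_A(t_\alpha,\pi)=\mathbb{E}\Big[\sum_{i\in I}\alpha X_i-\sum_{i\in S}(\alpha X_i-\tau_i)^+\Big]\le \mathbb{E}\Big[\sum_{i\in I}\big(\alpha X_i-(\alpha X_i-\tau_i)^+\big)\Big]=\mathbb{E}\Big[\sum_{i\in I}Y_i\Big].
\]
The inequality drops the nonnegative terms for $i\in S\setminus I$, and the final equality uses the pointwise identity $\alpha X_i-(\alpha X_i-\tau_i)^+=\min\{\alpha X_i,\tau_i\}=Y_i$. Since $I\in\mathcal I$, this is at most $\mathbb{E}[\max_{J\in\mathcal I}\sum_{j\in J}Y_j]$.

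\emph{Tightness for FRUGAL.} Both inequalities must hold with equality for Algorithm~\ref{ALG:FRUGAL}.

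(i) \emph{No probed element outside $I$ carries slack.} FRUGAL adds a probed element $j$ to $I$ immediately whenever $X_j\ge\tau_j$, and this is exactly the case in which $(\alpha X_j-\tau_j)^+$ could be positive. Here one must check the scaling convention in line 8: the threshold should be read as $\alpha X_j\ge\tau_j$, consistent with $Y_j$. With that reading, every $i\in S\setminus I$ has $(\alpha X_i-\tau_i)^+=0$. Elements with $\tau_i=+\infty$ have zero cost, so they contribute no slack either.

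(ii) \emph{The returned $I$ is a maximum-weight independent set for the weights $Y$.} FRUGAL is a lazy implementation of the matroid greedy algorithm on weights $Y_i$, using $\tau_i$ as a surrogate for $Y_i$ on unprobed elements. Since $Y_i\le\tau_i$ always, and $Y_i=\tau_i$ exactly when the element is accepted on probing, the greedy order is preserved. An unprobed element is probed only when its upper bound $\tau_j$ is the current maximum. If it turns out that $Y_j<\tau_j$, it re-enters the candidate pool with its true value $Y_j$. Thus the elements are added in nonincreasing order of $Y$ among the feasible ones, keeping only nonnegative values. This is the greedy algorithm, which is optimal on matroids. Moreover, any element never probed has $\tau_i$ below the final threshold or is spanned, so it could not improve the greedy solution.

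Combining (i) and (ii) gives equality throughout, so FRUGAL attains $\mathbb{E}[\max_{J\in\mathcal I}\sum_j Y_j]$. This proves both claims.

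\emph{Main obstacle.} I expect the hardest step to be (ii): showing rigorously that this lazy, adaptive probing reproduces the greedy run on the realized $Y$'s, including correct handling of ties and of negative $\tau_i$ (elements with $v_i<0$ are never taken). I would argue it by induction on iterations. The invariant is that $I$ is always a prefix of some greedy execution on the realized $Y$, and that every unprobed $i$ satisfies $Y_i\le\tau_i\le$ the current maximum value. The first step additionally requires a stopping-time/optional-sampling justification for the cost identity, which is standard.
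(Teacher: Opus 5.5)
Your proof is correct and follows the same approach as the paper. The paper gives no argument of its own; it simply invokes Singla's FRUGAL analysis with rewards $\alpha X_i$ and costs $c_i$, which is exactly the bound $U_A\le\mathbb{E}[\sum_{i\in I}Y_i]$ plus the lazy-greedy tightness argument you reconstruct, and your reading of line 8 as $\alpha X_j\ge\tau_j(\alpha)$ rather than the printed $X_j\ge\tau_j(\alpha)$ is the right one and is genuinely needed for step (ii).
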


\section{Agent's Best Response Characterization}

In this section, we characterize the agent's best responses and show how to compute an agent's best response using the FRUGAL Algorithm, see Algorithm~\ref{ALG:FRUGAL}. In particular, we show that after a sufficiently small perturbation of the costs, the FRUGAL algorithm computes an agent's best response.

Given a contract $t_\alpha$, Proposition~\ref{prop:max_agent_utility} states that Algorithm~\ref{ALG:FRUGAL} computes an agent's policy that maximizes the agent's utility $U_A(t_\alpha,\pi)$. Recall, that an agent's best response is a policy that maximizes the agent's utility $U_A(t_\alpha,\pi)$ and then subject to that maximizes the principal's utility $U_P(t_\alpha,\pi)$.
The next lemma shows that when the agent is selecting a policy that maximizes the principal's utility $U_P(t_\alpha,\pi)$ among the policies that maximize the agent's utility $U_A(t_\alpha,\pi)$, the agent should choose a strategy that maximizes the expected cost $\sum_{i\in S(\pi)} c_i$.  Recall that given an agent's strategy $\pi$,  $S(\pi)$ is the random variable representing the set of probed elements and $I(\pi)$ is the random variable representing the returned independent set.

\begin{lemma}\label{lem:best_response_max_cost}
    Given a linear contract $t_\alpha$, let $\pi^*$ be a strategy that maximizes the agent's utility, i.e.\ $U_A(t_\alpha,\pi^*)=\max_{\pi} U_A(t_\alpha,\pi)$. Then $\pi^*$ is an agent's best response if and only if $\pi^*$ maximizes $\mathbb{E}\left[\sum_{i \in S(\pi)}c_i\right]$ among the strategies in  $\argmax_{\pi} U_A(t_\alpha,\pi)$.
\end{lemma}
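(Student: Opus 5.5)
Fix $\alpha$ and look at the sum of the two utilities. For any strategy $\pi$,
\[
U_A(t_\alpha,\pi)+U_P(t_\alpha,\pi)=\mathbb{E}\Big[\sum_{i\in I(\pi)}X_i\Big]-\mathbb{E}\Big[\sum_{i\in S(\pi)}c_i\Big],
\]
and by definition
\[
U_A(t_\alpha,\pi)=\alpha\,\mathbb{E}\Big[\sum_{i\in I(\pi)}X_i\Big]-\mathbb{E}\Big[\sum_{i\in S(\pi)}c_i\Big],
\qquad
U_P(t_\alpha,\pi)=(1-\alpha)\,\mathbb{E}\Big[\sum_{i\in I(\pi)}X_i\Big].
\]
Let $u^*=\max_\pi U_A(t_\alpha,\pi)$. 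For every $\pi\in\argmax_\pi U_A(t_\alpha,\pi)$ we have $\alpha\,\mathbb{E}[\sum_{i\in I(\pi)}X_i]=u^*+\mathbb{E}[\sum_{i\in S(\pi)}c_i]$.

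The case analysis is on $\alpha$.

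If $\alpha>0$, then on the argmax set
\[
U_P(t_\alpha,\pi)=\frac{1-\alpha}{\alpha}\Big(u^*+\mathbb{E}\Big[\sum_{i\in S(\pi)}c_i\Big]\Big).
\]
When $\alpha<1$, this is a strictly increasing affine function of the expected cost. So maximizing $U_P$ over the argmax set is the same as maximizing $\mathbb{E}[\sum_{i\in S(\pi)}c_i]$ over it, which gives both directions of the lemma at once.

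The edge cases need separate care, since the tie-breaking is then degenerate.

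When $\alpha=1$, $U_P\equiv 0$ for every strategy, so every utility-maximizing strategy is a best response. The "only if" direction would then require that all maximizers have equal expected cost, which need not hold. The plan here is to use that the maximum of $\mathbb{E}[\sum_{i\in S(\pi)}c_i]$ is attained and is consistent with a best response. If the lemma is meant literally for $\alpha=1$, I would argue that this case is excluded or trivial, because the principal's utility is $0$ and $\alpha=1$ is never relevant for the optimal contract.

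When $\alpha=0$, the agent gets no reward. Any strategy that probes an element with $c_i>0$ has negative utility, while probing nothing gives $0$. Hence every maximizer probes, almost surely, only zero-cost elements, so the expected cost is $0$ for all maximizers and the cost criterion is vacuous. Meanwhile $U_P=\mathbb{E}[\sum_{i\in I(\pi)}X_i]$ can vary among these maximizers (for example, probing free elements or not), so the equivalence can fail here too.

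Since the displayed identity makes the $0<\alpha<1$ case immediate, the main obstacle is these boundary values. I would handle them by noting the degeneracy explicitly: for $\alpha=1$ the principal is indifferent among all strategies, and for $\alpha=0$ the equivalence can fail unless the problem is restricted. The lemma is then proved for $\alpha\in(0,1)$, with a remark that these are the only values relevant to principal optimization, together with $\alpha=0$ under the standard convention on zero-cost probes.
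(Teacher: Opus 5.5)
Your argument for $0<\alpha<1$ is exactly the paper's proof: on $\argmax_{\pi} U_A(t_\alpha,\pi)$ one has $U_P(t_\alpha,\pi)=\frac{1-\alpha}{\alpha}\bigl(u^*+\mathbb{E}[\sum_{i\in S(\pi)}c_i]\bigr)$, which is strictly increasing in the expected cost. The paper divides by $\alpha$ and uses $1-\alpha>0$ without comment. Your observation that the stated equivalence genuinely fails at the endpoints is therefore a correct refinement of an assumption the paper leaves implicit. At $\alpha=1$ every maximizer is a best response, yet maximizers can differ in cost. At $\alpha=0$ every maximizer has cost $0$, yet $U_P$ can differ among them.
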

\begin{proof}
Let $\beta$ be the maximum agent's utility under the contract $t_\alpha$, i.e.\ $\beta=\max_{\pi} U_A(t_\alpha,\pi)$. Let $\pi$ be an agent's strategy that maximizes the agent's utility, i.e.\ $U_A(t_\alpha,\pi)=\beta$. By definition of the agent's utility we have
\[ \beta = \mathbb{E}\left[ \alpha \sum_{i \in I(\pi)}X_i - \sum_{i \in S(\pi)}c_i \right] \,.\]
By definition of the principal's utility we have
\[
    U_p(t_\alpha,\pi)
    = (1-\alpha) \cdot \mathbb{E}\left[ \sum_{i \in I(\pi)}X_i \right]
    = \dfrac{1-\alpha}{\alpha} \cdot \left(\beta + \mathbb{E}\left[ \sum_{i \in S(\pi)}c_i \right] \right) \,.
\]
Since for a given contract $t_\alpha$, the values of  $\alpha$ and $\beta$ are fixed, i.e.\ do not depend on the agent's strategy, the maximization of  $U_p(t_\alpha,\pi)$ is equivalent to the maximization of  $\mathbb{E}[ \sum_{i \in S(\pi)}c_i ]$.
\qed\end{proof}

By Lemma~\ref{lem:best_response_max_cost}, an agent's best response is a strategy that first maximizes the agent's utility and subject to that maximizes the expected total cost of probing. We now show that with a perturbation on costs, the FRUGAL algorithm computes an agent's best response.

\begin{theorem}\label{thm:perturbed_cost_best_response}
Let us be given a linear contract $t_\alpha$. Then there exists $\varepsilon > 0$ such that for the modified agent's costs $c_i' := c_i(1-\varepsilon)$, $i\in E$ and with the same distributions $\mathcal F_i$, $i\in E$ there is a strategy $\pi'$ that maximizes the agent's utility.  Moreover, the strategy $\pi'$ corresponds to the strategy computed by the FRUGAL algorithm subject to the costs $c'_i$, $i\in E$, and the strategy $\pi'$ is an agent's best response subject to the original costs $c_i$, $i\in E$. Also, for any contract $t_\alpha$ such $\varepsilon$ can be computed efficiently, in particular it can be selected as the minimum nonzero value among $p_{i,k}$, $i\in E$, $k=1,\ldots,m$.
\end{theorem}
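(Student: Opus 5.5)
The idea is to treat the cost perturbation as a lexicographic tie-breaker: under the costs $c'_i=(1-\varepsilon)c_i$, the agent's utility of any strategy $\pi$ is
\[
U'_A(t_\alpha,\pi)=U_A(t_\alpha,\pi)+\varepsilon\,\mathbb{E}\Big[\sum_{i\in S(\pi)}c_i\Big].
\]
A maximizer of $U'_A$ therefore favors, among near-optimal strategies for $U_A$, those with larger expected cost. By Lemma~\ref{lem:best_response_max_cost}, such strategies are exactly the agent's best responses. By Proposition~\ref{prop:max_agent_utility} applied to the costs $c'_i$, the FRUGAL algorithm run with grades $\tau'_i(\alpha)$ computes a maximizer $\pi'$ of $U'_A$, so it only remains to show that $\pi'$ is also a maximizer of $U_A$ for $\varepsilon$ small. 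Once this holds, $\pi'$ maximizes $U_A+\varepsilon\,\mathbb{E}[\mathrm{cost}]$ and hence maximizes expected cost among the maximizers of $U_A$.

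The key step is a gap argument. First, reduce to a finite set of candidate strategies: deterministic strategies suffice, since both objectives are linear in the mixing. A deterministic adaptive strategy is a decision tree over at most $n$ probes with outcomes from finite supports, and the returned set can be taken to be a max-weight independent set among probed elements, so there are finitely many such trees. For each tree $\pi$ the quantity $U_A(t_\alpha,\pi)$ is a polynomial in the $p_{i,k}$ with coefficients built from $\alpha v_{i,k}$ and $c_i$. The differences $\beta-U_A(t_\alpha,\pi)$ over non-optimal trees are therefore positive and bounded away from $0$ by some $\delta>0$. Meanwhile, the expected cost of any tree is at most $C:=\sum_i c_i$. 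Choosing $\varepsilon<\delta/C$ ensures that every non-optimal tree $\pi$ satisfies $U'_A(\pi)<\beta\le U'_A(\pi^*)$ for any optimal $\pi^*$. Hence any maximizer of $U'_A$ is optimal for $U_A$ and, among such, maximizes expected cost.

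The main obstacle is the explicit claim that $\varepsilon$ may be taken as the minimum nonzero $p_{i,k}$, since the abstract gap $\delta$ is not obviously that large. For this step I would avoid general trees and work with the structure of FRUGAL instead. The utility equals $\mathbb{E}[\max_J\sum Y_j]$, and a strategy is optimal iff it is a ``frugal'' policy satisfying the conditions of \cite{singla2018price}: it never probes $i$ whose grade $\tau_i$ is negative, it does not stop when a nonnegative grade remains addable, and it keeps any element with $X_i\ge\tau_i$ once probed. Ties in $U_A$ arise exactly when some $\tau_i(\alpha)=0$ or when $\alpha X_i=\tau_i$ with positive probability, i.e.\ when the agent is indifferent between probing and not.

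I would show that scaling the costs by $(1-\varepsilon)$ shifts each grade $\tau_i$ upward by an amount that resolves these indifferences in favor of probing. The shift must be small enough not to cross any other threshold, either another grade or an atom of $\alpha X_j$. This comparison of the equations $\mathbb{E}[(\alpha X_i-\tau)^+]=c_i$ and $=(1-\varepsilon)c_i$ uses the piecewise-linear form with slopes equal to tail probabilities, which are at least $p_{\min}$. This is where $\varepsilon=\min p_{i,k}$ should emerge, and I expect the bookkeeping here to be the hardest part.
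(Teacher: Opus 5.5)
Your finite-decision-tree argument is a correct proof of the existence part, and it differs from the paper's. The paper works with the structure of FRUGAL. For $0<\varepsilon\le p_i$ (the smallest nonzero $p_{i,k}$ of element $i$) it shows $\tau_i\le\tau_i'\le\tau_i+\varepsilon c_i/p_i$. It then additionally requires $\max_i \varepsilon c_i/p_i$ to be smaller than every positive gap $\tau_j-\tau_i$ and $\alpha v_{j,k}-\tau_i$, so that the strict comparisons used by FRUGAL survive. Hence FRUGAL under $c'$ is FRUGAL under $c$ with a suitable tie-breaking, hence $U_A$-optimal. The identity $U'_A=U_A+\varepsilon\,\mathbb{E}[\sum_{i\in S(\pi)}c_i]$ then finishes exactly as in your argument. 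You replace the ``same policy up to tie-breaking'' step, the part needing the most care (e.g.\ an atom with $\alpha X_j=\tau_j$ is accepted immediately under $c$ but not under $c'$), by a soft argument. It uses only Proposition~\ref{prop:max_agent_utility} for the costs $c'$ and Lemma~\ref{lem:best_response_max_cost}. The price is that $\delta$, hence $\varepsilon$, is not explicit.

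That is where the gap lies. Your proposal does not show that $\varepsilon$ can be computed efficiently, and the route you sketch for the last clause cannot succeed, because $\varepsilon=\min p_{i,k}$ does not work in general. The paper's proof does not use this value either: $\varepsilon\le\min_i p_i$ is only what makes the shift bound valid, and the two gap conditions are imposed on top of it. Counterexample: take the rank-one uniform matroid on two elements with deterministic values $\alpha v_1=\alpha v_2=1$ and $c_1=c_2=1/2$. The smallest nonzero $p_{i,k}$ is $1$, so $c'\equiv 0$, and FRUGAL under $c'$ probes both elements, with original utility $0$. Probing a single element gives $1/2$. So the grade shift must be compared with the gaps between thresholds, which depend on $\alpha$, the $v_{i,k}$ and the $c_i$, not only on the $p_{i,k}$. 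Your list of tie sources is also incomplete: $\tau_i=\tau_j$ or $\tau_i=\alpha v_{j,k}$ likewise yield optimal policies with different expected costs.

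To get efficient computability within your own approach, use the bit structure you already noticed. For every deterministic tree, $U_A(t_\alpha,\pi)\in\frac1D\mathbb{Z}$, where $D$ is a common denominator of all $\alpha v_{i,k}$ and $c_i$ multiplied by $\prod_{i\in E}\mathrm{lcm}_k\{\text{denominator of }p_{i,k}\}$. This holds because each leaf probability is a product of at most one $p_{i,k}$ per element. Hence $\delta\ge 1/D$, and $\varepsilon:=1/(2D\sum_i c_i)$ works and has polynomial bit size (assuming $\sum_i c_i>0$; otherwise $c'=c$). Alternatively, follow the paper and impose the two gap conditions explicitly.
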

\begin{proof}Let us describe the way to obtain the desired $\varepsilon$, but before that we need to obtain some technical statements. First, note that for each $i\in E$ with $c_i\neq 0$  we have \[ \Pr[\alpha X_i - \tau_i \geq c_i] \geq \min_{k=1,\ldots,m\,:\, p_{i,k}\neq 0} p_{i,k}:=p_i\,.\] Indeed, the above inequality follows from $\mathbb{E}[(\alpha X_i - \tau_i )^+]=c_i$ in the definition of $\tau_i$.

Second, we want to show that, for every $i\in E$ with $c_i\neq 0$ and every $\varepsilon$, $0\leq \varepsilon<p_i$ the grade value $\tau'_i$ with respect to the costs $c_i'$ is sufficiently close to the grade value $\tau_i$ with respect to the cost $c_i$. In particular, let us show 
\begin{equation}\label{eq:perturbed_cost_change_in_grade}
\tau_i \leq \tau_i' \leq \tau_i + p_i^{-1}\varepsilon c_i\,.
\end{equation}
The inequality $\tau_i \leq \tau_i'$ follows from the definition of grade values and the fact that $c'_i\leq c_i$. To show the inequality $\tau_i' \leq \tau_i + p_i^{-1}\varepsilon c_i$ let us show that $\mathbb{E}[\left(\alpha X_i - (\tau_i+p_i^{-1}\varepsilon c_i)\right)^+] $ is smaller or equal to $c_i'$.
 Indeed, for every $i\in E$ and for every $\varepsilon$, $0\leq \varepsilon\leq p_i$, we have
\begin{align*}
    \qquad \mathbb{E}[\left(\alpha X_i - (\tau_i+p_i^{-1}\varepsilon c_i)\right)^+] 
    &= \sum_{k=1, \ldots,m} p_{i,k}\cdot(\alpha v_{i,k} - \tau_i-p_i^{-1}\varepsilon c_i)^+\\
    &\leq \sum_{k=1, \ldots,m} p_{i,k}\cdot(\alpha v_{i,k} - \tau_i)^+-\sum_{\substack{k=1, \ldots,m:\\ \alpha v_{i,k}\geq  \tau_i+p_i^{-1}\varepsilon c_i}}p_{i,k}\cdot p_i^{-1}\varepsilon c_i\\
    &\leq c_i-p_i\cdot p_i^{-1}\varepsilon c_i = c_i-\varepsilon c_i=c'_i\,,
\end{align*}
where the last inequality follows from $\Pr[\alpha X_i - \tau_i \geq c_i] \geq p_i$ and $p_i^{-1}\varepsilon c_i\leq c_i$. Note that to obtain the inequality~\eqref{eq:perturbed_cost_change_in_grade} it is sufficient to select $\varepsilon$ such that 
\begin{equation}\label{eq:perturbed_cost_epsilon1}
0<\varepsilon\leq\min_{i\in E}p_i\,.
\end{equation}

Let us now select $\varepsilon$ such that 
\begin{equation}\label{eq:perturbed_cost_epsilon2}
\max_{i\in E} \,p_i ^{-1} \varepsilon c_i \,< \,\min \{\mu\,:\, \mu =\alpha v_{j,k}-\tau_i \text{ for some }i,\, j\in E,\, k=1\ldots,m,\, \mu > 0  \}
\end{equation}
and
\begin{equation}\label{eq:perturbed_cost_epsilon3}
\max_{i\in E} \,p_i ^{-1} \varepsilon c_i \,< \,\min \{\mu\,:\, \mu =\tau_j-\tau_i \text{ for some }i,\, j\in E, \mu > 0  \}\,.
\end{equation}

At this point, let us select $\varepsilon$ such that~\eqref{eq:perturbed_cost_epsilon1}, \eqref{eq:perturbed_cost_epsilon2} and \eqref{eq:perturbed_cost_epsilon3} hold. Then, from~\eqref{eq:perturbed_cost_change_in_grade} and~\eqref{eq:perturbed_cost_epsilon2}  it is straightforward to see that for every realization and for every $i$, $j\in E$ if $Y_i>\tau_j$ then  $Y_i>\tau'_j$. Similarly, from~\eqref{eq:perturbed_cost_change_in_grade} and~\eqref{eq:perturbed_cost_epsilon3}  it is straightforward to see that for every realization and for every $i$, $j\in E$ if $\tau_i>\tau_j$ then  $\tau'_i>\tau'_j$. Thus, in this case every policy $\pi'$ computed by a FRUGAL algorithm subject to the costs $c'_i$, $i\in E$ is also a policy computed by a FRUGAL algorithm subject to the costs $c_i$, $i\in E$ and an appropriate tie breaking.

Let us compute the agent's utility $U'_A(t_\alpha,\pi)$ of a policy $\pi$ subject to the costs $c'_i$, $i\in E$
\begin{align*}
U'_A(t_\alpha,\pi)=\mathbb{E}[\sum_{i\in I(\pi)}\alpha X_i-\sum_{i\in S(\pi)} c'_i]=\mathbb{E}[\sum_{i\in I(\pi)}\alpha X_i-\sum_{i\in S(\pi)}(1-\varepsilon) c_i]=\\
\mathbb{E}[\sum_{i\in I(\pi)}\alpha X_i-\sum_{i\in S(\pi)} c_i]+\varepsilon \mathbb{E}[\sum_{i\in S(\pi)} c_i]=U_A(t_\alpha,\pi)+\varepsilon \mathbb{E}[\sum_{i\in S(\pi)} c_i].
\end{align*}
Note that from the above discussion and from Proposition~\ref{prop:max_agent_utility}, the policy $\pi'$ computed by the FRUGAL algorithm is a policy $\pi$ that maximizes both $U_A(t_\alpha,\pi)$ and $U'_A(t_\alpha,\pi)$. Then from the above equation it follows that $\pi'$ is a policy $\pi$ that maximizes $U_A(t_\alpha,\pi)$ and subject to that maximizes $ \mathbb{E}[\sum_{i\in S(\pi)} c_i]$. Thus the policy $\pi'$ is indeed a best agent's response.

\qed\end{proof}

\begin{remark}\label{rem:frugal_for_perturbed_costs}
Theorem~\ref{thm:perturbed_cost_best_response} and Proposition~\ref{prop:max_agent_utility} give us a way to compute a best agent's response in an efficient way. For this, we modify the costs as in Theorem~\ref{thm:perturbed_cost_best_response}  and then use the FRUGAL algorithm.
\end{remark}

\section{Critical Values}

In Section~\ref{sec:grade_surrgoate}, given a linear contract $t(\alpha)$ we defined the grade value $\tau_i$ and the surrogate random variable  $Y_i$ for each $i\in E$. Note, that both $\tau_i$ and  $Y_i$, $i\in E$ depend on the linear contract $t_\alpha$. To study how the grades and surrogates change with $\alpha$, let us denote them as $\tau_i(\alpha)$, $Y_i(\alpha)$, $i\in E$. Note that by definition, for every $i\in E$ we have $Y_i(\alpha)=\min\{\alpha X_i,\tau_i(\alpha)\}$, so it suffices to study the behaviour of the functions $\tau_i(\alpha)$, $i\in E$.

\begin{lemma}\label{lem:grade_piecewise_linear} For every $i\in E$, $\tau_i(\alpha)$ is a continuous piecewise linear function with at most $(m+1)$ segments.
\end{lemma}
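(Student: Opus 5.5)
We plan to write the defining equation of the grade explicitly and observe that, once we know which outcomes lie strictly above the grade, the equation is linear in both $\tau$ and $\alpha$. If $c_i=0$ then $\tau_i(\alpha)=+\infty$ for all $\alpha$ by convention, and the statement holds trivially, so assume $c_i>0$. Discard outcomes with $p_{i,k}=0$ and sort the remaining outcome values as $v_{(1)}>v_{(2)}>\dots>v_{(r)}\geq 0$, with $r\le m$, merging equal values. Because $\alpha\ge 0$, this order of the products $\alpha v_{(k)}$ does not depend on $\alpha$.

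\emph{Step 1 (explicit formula on each piece).} For fixed $\alpha$, the function $\tau\mapsto \mathbb{E}[(\alpha X_i-\tau)^+]$ is continuous and nonincreasing. It is strictly decreasing wherever it is positive, and it tends to $+\infty$ as $\tau\to-\infty$. Since $c_i>0$, the solution $\tau_i(\alpha)$ is therefore unique. Suppose the set of outcomes with $\alpha v_{(k)}>\tau_i(\alpha)$ is exactly the top $j$ outcomes $\{1,\dots,j\}$. Then the equation reads $\sum_{k\le j}p_{(k)}(\alpha v_{(k)}-\tau)=c_i$, and hence
\[
\tau_i(\alpha)=\frac{\alpha\sum_{k\le j}p_{(k)}v_{(k)}-c_i}{\sum_{k\le j}p_{(k)}}=:L_j(\alpha),
\]
which is an affine function of $\alpha$. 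The index $j$ cannot be $0$, because then the left-hand side would be $0<c_i$. So on the set of $\alpha$ where the active index equals $j$, the function $\tau_i$ coincides with the line $L_j$. Since $j\in\{1,\dots,r\}$, at most $r\le m$ distinct lines occur. The case $\alpha=0$ fits the same formula: there $\tau_i(0)=-c_i$ and $j=r$.

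\emph{Step 2 (monotonicity of the active index).} This is the main point: we must show that each line $L_j$ is used on a single interval of $\alpha$, not on several scattered pieces. For $\alpha>0$, write $\tau=\alpha s$. The equation then becomes $g(s):=\mathbb{E}[(X_i-s)^+]=c_i/\alpha$. The function $g$ is nonincreasing, and strictly decreasing where it is positive. Hence $s(\alpha)=\tau_i(\alpha)/\alpha$ is nondecreasing in $\alpha$, because $c_i/\alpha$ decreases as $\alpha$ grows. The active index $j$ is determined by where $s$ lies relative to the sorted values: $j$ is the number of $k$ with $v_{(k)}>s$. Therefore $j$ is nonincreasing in $\alpha$, and each value of $j$ occurs on an interval. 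This yields at most $r\le m$ linear segments for $\alpha>0$. The extra segment allowed by the bound $m+1$ absorbs any boundary effects, such as the point $\alpha=0$ or a degenerate first piece.

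\emph{Step 3 (continuity).} Consider $F(\alpha,\tau)=\mathbb{E}[(\alpha X_i-\tau)^+]-c_i$. It is jointly continuous, and for every $\alpha$ it is strictly monotone in $\tau$ near the root. A standard argument for roots of continuous functions that are strictly monotone in one variable then shows that the unique root $\tau_i(\alpha)$ is continuous in $\alpha$. One can also check continuity directly: at a breakpoint we have $\tau=\alpha v_{(j)}$, and there the outcome $j$ contributes zero, so $L_j$ and $L_{j-1}$ take the same value. We expect Step 2 to be the only nonroutine part. Everything else is direct substitution into the definition of the grade.
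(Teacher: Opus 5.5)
Your proposal is correct and follows the same approach as the paper: identify which outcomes lie strictly above the grade, so that the defining equation becomes affine and $\tau_i(\alpha)$ coincides with one of finitely many lines $L_j$. Your Step 2 (the substitution $\tau=\alpha s$, which shows the active index is monotone in $\alpha$) and Step 3 make explicit what the paper only asserts---that each line is used on a single interval and that the pieces glue continuously---and your formula for $L_j$ is the correct version of the paper's displayed formula, whose numerator should read $\alpha\sum_{k}p_{i,k}v_{i,k}-c_i$.
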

\begin{proof}
For every $\alpha$ and $i\in E$ with $c_i\neq 0$, by definition of $\tau_i(\alpha)$ we have the following equation
\[\mathbb{E}[\left(\alpha X_i - \tau_i(\alpha)\right)^+] = \sum_{k =1, \ldots,m} p_{i,k}\cdot \left(\alpha v_{i,k}-\tau_i(\alpha)\right)^+= c_i\,.\]
Without loss of generality, we may assume $0 \leq v_{i,1} < \hdots< v_{i,m}$. 
\[
    \sum_{k \in [m]} p_{i,k} \max(v_{i,k}\alpha  - \tau_i(\alpha), 0) = c_i\,.
\]
To determine the value of $\tau_i(\alpha)$, we can first determine the largest $j$, $j=1,\ldots,m$ such that 
\[
    \sum_{k=j, \ldots, m} p_{i,k}(\alpha\cdot v_{i,k}  - \alpha\cdot  v_{i,j}) > c_i \,,
\]
and then we have
\[
\tau_i(\alpha)= \frac{\sum_{k=j,\ldots,m}p_{i,k}(\alpha\cdot v_{i,k}-c_i)}{\sum_{k=j,\ldots,m}p_{i,k}}\,.
\]
If no such $j$ exists then $\tau_i(\alpha)=-c_i+\sum_{k=j,\ldots,m}p_{i,k}\cdot\alpha\cdot v_{i,k}$.
This shows that the function $\tau(\alpha)$ is a piecewise linear function with at most $m+1$ segments. One can also observe the above computation for the values of the function $\tau(\alpha)$ show that it is a continuous function.
\qed\end{proof}

The next lemma shows that given costs $c_i$, $i\in E$ and distributions $X_i\sim \mathcal F_i$, $i\in E$, the domain of~$\alpha$ can be partitioned in $O(n^2m^2)$ intervals such that on each interval the outcome of the comparison between the values $\tau_i(\alpha)$, $i\in E$, the values $\alpha\cdot v_{i,k}$, $i\in E$, $k=1.\ldots,m$ does not depend on $\alpha$.\begin{lemma}\label{lem:interval_partition}
Let us consider the functions $v_{i,k}(\alpha):=v_{i,k}\cdot \alpha$ for $i\in E$ and $k=1.\ldots,m$ and also functions $\tau_i(\alpha)$ for $i\in E$. There are values $\gamma_1$, $\gamma_2$, \ldots, $\gamma_t$, $0\leq \gamma_1<\gamma_2< \ldots, < \gamma_t\leq 1$ such that $t\leq 20n^2m^2$ for every $j=0, \ldots, t$, every $\alpha'$, $\alpha'' \in [\gamma_j, \gamma_{j+1}]$ and for every two functions
\[
f'(\cdot), f''(\cdot)\,\in\, \left\{ v_{i,k}(\cdot)\,:\, i\in E,\, k=1,\ldots,m \right\}\cup \left\{ \tau_i(\cdot)\,:\, i\in E \right\}
\]
we have
\[
\left(f'(\alpha')-f''(\alpha')\right)\left(f'(\alpha'')-f''(\alpha'')\right)\geq 0\,.\]
Here, we define $\gamma_{0}:=0$ and $\gamma_{t+1}:=1$.

Moreover, such $\gamma_1$, $\gamma_2$, \ldots, $\gamma_t$ can be computed efficiently.
\end{lemma}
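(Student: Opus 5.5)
Our approach is to obtain the breakpoints $\gamma_1,\ldots,\gamma_t$ as the union of two finite sets of points: the points where some function in the family changes its linear piece, and the points where two functions in the family cross. Between consecutive points of this union, every pair of functions is linear and their difference has no zero in the open interval. Hence the sign of the difference is constant there, and by continuity this gives the claimed product inequality on the closed interval.

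\textbf{Step 1 (pieces).} The functions $v_{i,k}(\alpha)=v_{i,k}\alpha$ are linear on $[0,1]$. By Lemma~\ref{lem:grade_piecewise_linear}, each $\tau_i(\cdot)$ with $c_i\neq 0$ is continuous and piecewise linear with at most $m+1$ segments, so it has at most $m$ breakpoints in $(0,1)$. For $c_i=0$ we have $\tau_i\equiv+\infty$; we treat it as a constant that lies above every finite function, so all its comparisons are trivially sign-constant. Collecting the breakpoints over all $i$ gives a set $B$ with $|B|\le nm$. These breakpoints are computable: they are the values of $\alpha$ at which the index $j$ in the proof of Lemma~\ref{lem:grade_piecewise_linear} changes. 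Since $\sum_{k\ge j}p_{i,k}\alpha(v_{i,k}-v_{i,j})$ is linear in $\alpha$, each such point solves a linear equation $\alpha\sum_{k\ge j}p_{i,k}(v_{i,k}-v_{i,j})=c_i$.

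\textbf{Step 2 (crossings).} The points of $B$ split $[0,1]$ into at most $nm+1$ subintervals, and every function in the family is linear on each of them. Take any two functions $f',f''$ among the $N\le nm+n$ functions. On a fixed subinterval, $f'-f''$ is linear, so either it vanishes identically or it has at most one zero. If it vanishes identically, the sign condition holds trivially because the product is $0$. Otherwise we add that single zero, if it lies in the subinterval, to our set of breakpoints. Each such crossing is the root of an explicit linear equation, so it can be computed efficiently.

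\textbf{Step 3 (conclusion).} Let $\gamma_1<\dots<\gamma_t$ be all the points collected in Steps 1 and 2. On each $[\gamma_j,\gamma_{j+1}]$, every difference $f'-f''$ is continuous and linear. Either it is identically zero, or it has no zero in the open interval and so has constant sign there. In the latter case it can vanish only at the endpoints, so $(f'(\alpha')-f''(\alpha'))(f'(\alpha'')-f''(\alpha''))\ge 0$ for all $\alpha',\alpha''$ in the closed interval.

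\textbf{Counting (main obstacle).} The main difficulty is keeping the count down to $20n^2m^2$. Counting one crossing per pair per subinterval is too crude: it gives about $N^2\cdot nm$, which is cubic. We would sharpen this as follows.

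Pairs $v_{i,k},v_{i',k'}$ are both linear through the origin, so they cross only at $\alpha=0$. They therefore contribute no interior points.

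For a pair $\tau_i,\tau_{i'}$, the difference is piecewise linear with at most $2m+1$ pieces, so it has at most $2m+1$ isolated zeros. Over all pairs this gives at most $n^2(2m+1)$ points.

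For a pair $\tau_i,v_{j,k}$, the difference is piecewise linear with at most $m+1$ pieces, so it has at most $m+1$ isolated zeros. Over all pairs this gives at most $n\cdot nm\cdot(m+1)$ points.

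Adding the at most $nm$ points of $B$, the total is bounded by $nm+n^2(2m+1)+n^2m(m+1)\le 20n^2m^2$ for $n,m\ge 1$.
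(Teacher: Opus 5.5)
Your proof is correct and follows essentially the same route as the paper. The $\gamma_j$ are the crossing points of pairs of functions, the lines $v_{i,k}\alpha$ meet only at the origin, and the $\tau$--$\tau$ and $\tau$--$v$ crossings are bounded per pair using the piecewise-linear structure from Lemma~\ref{lem:grade_piecewise_linear}. Your extra care is a harmless refinement that makes rigorous a degenerate case the paper's count of ``intersection points'' glosses over: adding the breakpoints of the $\tau_i$, handling differences that vanish identically on a piece, treating $c_i=0$, and using the sharper $2m+1$ bound instead of $(m+1)^2$.
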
\label{lem:critical_values}
\begin{proof}
    According to Lemma~\ref{lem:grade_piecewise_linear}, the functions in $\{\tau_i(\cdot)\,:\,i \in E\}$ are piecewise linear functions, with at most $(m+1)$ segments. Thus any two of them intersect at most $(m+1)^2$ times. Thus in total, there are at most $ (m+1)^2\cdot \binom{n}{2}$ intersection points of the functions in $\{\tau_i(\cdot)\,:\,i \in E\}$. The functions in $\left\{  v_{i,k}(\cdot)\,:\, i\in E,\, k=1,\ldots,m \right\}$ are linear functions. So the origin is the unique intersection point for any of the functions in $\left\{  v_{i,k}(\cdot)\,:\, i\in E,\, k=1,\ldots,m \right\}$,  there are at  most $m+1$ intersection points between a function in $\left\{ \tau_i(\cdot)\,:\, i\in E \right\}$ and a  function in $\left\{  v_{i,k}(\cdot)\,:\, i\in E,\, k=1,\ldots,m \right\}$. Thus in total we have at most $2n(n+1)(m+1)^2$, an do at most $20 n^2 m^2$ points of intersection.
    
Note, that all the intersection points above can be computed through an extensive inspection of pairs of functions in the lemma. To finish the proof, note that by Theorem~\ref{thm:perturbed_cost_best_response} given any pair of functions, we can efficiently compute all of their intersection points. 

\qed\end{proof}

We  apply Lemma~\ref{lem:interval_partition} for the perturbed costs $c'_i:=(1-\varepsilon)c_i$, $i\in E$ as in Remark~\ref{rem:frugal_for_perturbed_costs}. The values $\gamma_0=0$ and $\gamma_1$, $\gamma_2$, \ldots, $\gamma_t$, $\gamma_1<\gamma_2< \ldots < \gamma_t$ provided by Lemma~\ref{lem:interval_partition} are called \emph{critical values}. Note, that due to the design of the FRUGAL algorithm,  the impact of $\alpha$ on the probed set $S$ and on the returned set $I$ is limited to the comparison between the values $\tau_i$, $i\in E$ and random variables $Y_i$, $i\in E$. Lemma~\ref{lem:interval_partition} shows that for every $j=0,\ldots,t$ we can assume that for all $\alpha'$, $\alpha''$ in the interval $[\gamma_j, \gamma_{j+1}]$ and for each realization of $X_i$, $i\in E$, in the FRUGAL algorithm the probed set $S$ and the returned set $I$ are the same for both $\alpha'$ and $\alpha''$.  Now, since for each interval $[\gamma_j, \gamma_{j+1})$ and each realization of $X_i$, $i\in E$ the returned set is the same, to maximize their utility over the contracts $t_\alpha$, $\alpha\in [\gamma_j, \gamma_{j+1})$ the principal would select the contract $t_\alpha$ with $\alpha:=\gamma_j$. Indeed, this way the principal gets the same expected reward but minimizes the payment to the agent whenever the principal is restricted to select a contract $t_\alpha$, $\alpha\in [\gamma_j, \gamma_{j+1})$.

\section{Relationships between OLCPM and \rpm}

In this section, we show that the problems OLCPM and \rpm are essentially equivalent both in terms of finding the exact solution and in terms of approximating them.

\subsection{Reducing \rpm to OLCPM}\label{sec:RPM-to-OLCPM}
The next lemma allows us to efficiently  transform an instance of \rpm into another instance of \rpm, where all elements in the matroid have a sufficiently small probability of existing. This is done by replacing an element with large probability of existing by a collection of elements parallel to it, where each of these parallel elements has a small probability of existing. Our construction guarantees that the answer to the new instance of \rpm is at least the answer to the original instance of \rpm, but the difference between these answers is sufficiently small.

\begin{lemma}\label{lem:RPM_cleanup}
    Let us be given an instance of \rpm  for matroid $\mathcal{M} = (E,\mathcal{I})$ with $|E| = n$,  an element $e\in E$ and a probability $p_i$ of existing for each $i\in E\setminus\{e\}$. Let  $\delta$ and $\varepsilon$ be in $(0,1)$. There is an instance of \rpm given by a  matroid $\mathcal{M}' = (E',\mathcal{I}')$ with $|E'| = n'$,  an element $e'\in E'$ and with a probability $p'_i$ of existing for $i\in E'\setminus\{e'\}$. Moreover, the following statements are true:
    \begin{enumerate}[(i)]
        \item $\mathcal{M}'$ is obtained from $\mathcal M$ by replacing each element $i\in E\setminus\{e\}$ by several elements $P_i$, where each element in $P_i$  is parallel to the element $i$ and  where $|P_i|$ is at most $1$ if $p_i\leq \delta$ and  at most $\max(1,\lceil1/\log_{1-\delta}(1-p_i)\rceil)$ if $\delta<p_i<1$ and is at most $\max(1,\lceil1/\log_{1-\delta}(1-\varepsilon/n)\rceil)$ if $p_i=1$.
        
        \item For each $i\in E\setminus \{e\}$, we have $p_{i'}\leq \delta$ for all $i'\in P_i$.

        \item The outcome of the obtained \rpm instance is in $[\rho, \rho+\varepsilon]$ where $\rho$ is the  outcome of the original \rpm instance. Moreover, the obtained \rpm instance can be constructed in a polynomial time and  has size polynomial in $n$, $\size(p_{i}\,:\, i\in E\setminus\{e\})$, $\size(1/\delta)$, $\size(1/\varepsilon)$.
        
    \end{enumerate}
\end{lemma}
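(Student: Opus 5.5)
The plan is to replace each element by a bundle of parallel copies whose independent existence events combine, through their union, into an event of (almost) the same probability as the original element. The key observation is that the event ``$e$ is not spanned by $T$'' depends on $T$ only through its closure. Replacing $i$ by a set $P_i$ of parallel copies therefore changes nothing except the probability that ``some copy of $i$ exists'': a set containing at least one copy of $i$ has the same span as the same set with $i$ in place of those copies. So if the copies in $P_i$ exist independently with probability $q_i$, the new instance is distributed exactly like the original instance with $p_i$ replaced by $\tilde p_i := 1-(1-q_i)^{|P_i|}$.

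The construction has three cases.
\begin{itemize}
\item If $p_i\le\delta$, keep $i$ unchanged, so $|P_i|=1$ and $\tilde p_i=p_i$.
\item If $\delta<p_i<1$, set $k_i:=\lceil 1/\log_{1-\delta}(1-p_i)\rceil$, which equals $\lceil \ln(1-\delta)/\ln(1-p_i)\rceil\ge 1$. Take $|P_i|=k_i$ copies, each with probability $q_i:=1-(1-p_i)^{1/k_i}$. Then $\tilde p_i=p_i$ exactly. Also $(1-p_i)^{1/k_i}\ge 1-\delta$, because $k_i\ge \ln(1-p_i)/\ln(1-\delta)$, and hence $q_i\le\delta$.
\item If $p_i=1$, first lower $p_i$ to $1-\varepsilon/n$, and then treat it as in the previous case with $k_i=\lceil 1/\log_{1-\delta}(1-\varepsilon/n)\rceil$. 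If already $1-\varepsilon/n\le\delta$, a single copy suffices, which explains the $\max(1,\cdot)$ in (i).
\end{itemize}
This gives (i) and (ii).

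For (iii), the probabilities change only for elements with $p_i=1$, and each such change lowers the probability by $\varepsilon/n$. Couple the two instances by letting the original element exist whenever ``some copy exists'' or an independent extra event of probability $\varepsilon/(n\cdot\ldots)$ occurs. More simply: on the event that every such element has some copy present, the two existing sets have the same span. The complement has probability at most $n\cdot\varepsilon/n=\varepsilon$ by a union bound. Monotonicity is also needed: the set $\{T: e\notin \mathrm{span}(T)\}$ is down-closed, so lowering probabilities can only increase the unreliability. Together these give an outcome in $[\rho,\rho+\varepsilon]$.

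The main obstacle is polynomiality. First, $k_i=O(\log(1/(1-p_i))/\delta)$, and $\log(1/(1-p_i))$ is bounded by $\size(p_i)$, while $\log(n/\varepsilon)$ is polynomial in the size. Second, $q_i$ involves a $k_i$-th root, which is irrational in general, so it cannot be written down exactly. The plan is to round $q_i$ up to a rational $\hat q_i$ with polynomially many bits. For $p_i<1$ this keeps $\tilde p_i\ge p_i$, and the error it introduces is charged to the $\varepsilon$ budget by the same monotonicity and union-bound argument, for example by allotting $\varepsilon/(2n)$ to each element. The resulting $\hat q_i$ may slightly exceed $\delta$. If so, use one extra copy, which restores $q_i\le\delta$ while keeping the bound in (i) up to this rounding. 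This rounding step is the part I would check most carefully against the exact bounds stated in (i).
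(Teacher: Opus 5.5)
Your overall plan is right and matches the paper: replace elements by parallel copies, use the fact that only the closure of $T$ matters, and handle the elements with $p_i=1$ by monotonicity plus a union bound. But two steps fail as written.

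\textbf{First, your choice of $k_i$.} For $\delta<p_i<1$ we have $\ln(1-p_i)<\ln(1-\delta)<0$, so $\ln(1-\delta)/\ln(1-p_i)\in(0,1)$ and $k_i=\lceil 1/\log_{1-\delta}(1-p_i)\rceil=1$. Then $q_i=p_i>\delta$, and (ii) fails. The inequality $k_i\ge \ln(1-p_i)/\ln(1-\delta)$ that you invoke is the reciprocal of what your definition gives.

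The count you actually need is $\lceil\log_{1-\delta}(1-p_i)\rceil$; you implicitly use it when you write $k_i=O(\log(1/(1-p_i))/\delta)$. After lowering $p_i=1$ to $1-\varepsilon/n$, the count is $\lceil\log_{1-\delta}(\varepsilon/n)\rceil$. You copied the reciprocal from item (i), where it is a typo. The paper's proof repeats the typo, although its construction really produces $\lceil\log_{1-\delta}(1-p_i)\rceil$ copies.

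To see that (i) cannot hold literally, take $E=\{e,i\}$ with $i$ parallel to $e$. Any instance satisfying (ii) has outcome at least $1-|P_i|\delta$, so (iii) forces $|P_i|\ge(p_i-\varepsilon)/\delta$, which exceeds $1$ once $\varepsilon<p_i-\delta$. The same example shows that the size is polynomial in $1/\delta$, not in $\size(1/\delta)$. Your polynomiality claim should be phrased that way; the paper's construction has the same dependence.

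\textbf{Second, your rounding goes the wrong way.} Rounding $q_i$ up gives $\tilde p_i\ge p_i$. By the very monotonicity you cite, this \emph{decreases} the unreliability, so the outcome may drop below $\rho$; the interval $[\rho,\rho+\varepsilon]$ is one-sided. Rounding down fixes this:
\begin{itemize}
\item $\hat q_i\le q_i\le\delta$ holds automatically, so no extra copy is needed;
\item $\tilde p_i\le p_i$;
\item $p_i-\tilde p_i=(1-\hat q_i)^{k_i}-(1-q_i)^{k_i}\le k_i(q_i-\hat q_i)$, which can be made at most $\varepsilon/(2n)$ with polynomially many bits.
\end{itemize}

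The paper sidesteps rounding altogether. It repeatedly replaces an element of probability $p>\delta$ by two parallel elements with probabilities $\delta$ and $(p-\delta)/(1-\delta)$. The probability that at least one of them exists is exactly $p$, and the remainder's non-existence probability is multiplied by $1/(1-\delta)$. So after $\lceil\log_{1-\delta}(1-p)\rceil$ copies all probabilities are at most $\delta$, and every number stays rational. Thus $\tilde p_i=p_i$ exactly, and the only error is the $\varepsilon/n$ lowering for $p_i=1$. Your equal split $q_i=1-(1-p_i)^{1/k_i}$ gains nothing over this, and it is what creates the irrationality you flagged as the main obstacle.
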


\begin{proof} 
First, let us update each $p_i$, $i\in E\setminus\{e\}$ to be $1-\varepsilon/n$ whenever $p_i=1$. Clearly such a change produces a new instance of \rpm where the outcome is in $[\rho, \rho+\varepsilon]$ due to the union bound.

The desired construction of $\mathcal{M}'$ can be obtained iteratively  as follows. Start with $E':=E$ and $p'_i:=p_i$ for each $i\in E\setminus \{e'\}$. Here, $e'$ is the special element $e$ in the original instance of \rpm.

For each $i\in E'\setminus\{e'\}$ with $\delta<p'_{i}<1$, we introduce two elements $j$, $j'$ that are parallel  to $i$ and update $E':=(E'\setminus \{i\})\cup \{j, j'\}$, $p'_j:=\delta$ and $p'_{j'}:=p'_{i}/(1-\delta)-\delta/(1-\delta)$. It is straightforward to check that the probability that at least one of the elements $j$, $j'$ exists equals $p'_j+(1-p'_j)p'_{j'}$, and so equals $p'_i$. Moreover, note that we have $(1-p'_{j'})/(1-p'_i)=1/(1-\delta)$; and so, every $i\in E\setminus \{e\}$ leads to  at most $\lceil1/\log_{1-\delta}(1-p_i)\rceil$ parallel elements in $\mathcal{M}'$ if $\delta<p_i<1$; and to at most $\lceil1/\log_{1-\delta}(1-\varepsilon/n)\rceil$ parallel elements in $\mathcal{M}'$ if $p_i=1$.

\qed\end{proof}

Given an instance of \rpm with sufficiently small probabilities for elements to exist, the next lemma produces an OLCPM instance on the same matroid. One of the crucial parameters in the construction is the number $\beta$. Let $\rho$ be the answer to the instance of \rpm. Then the constructed instance of OLCPM has only two critical values of interest: the critical value $0$ with the principal's utility $\beta$ and the critical value $1-\xi$ with the principal's utility $\rho+\varepsilon$, where $\xi$ and $\varepsilon$ are also parameters in the construction. This fact allows us to reduce \rpm to OLCPM using binary search on potential values of $\rho$ while setting $\varepsilon$ to be sufficiently small.

\begin{lemma}\label{lem:RPM_reduction}
     Let us be given an instance of \rpm  on a matroid $\mathcal{M} = (E,\mathcal{I})$ with $|E| = n\geq 2$ and  an element $e\in E$. Let us be given a probability $p_i$ of existing for each $i\in E\setminus\{e\}$. Let $\beta$ be in $[0,1]$ and $\varepsilon$ be in $(0,\frac{1}{2})$ and $\delta$  be in $(0,1)$ such that $\delta\frac{1/\varepsilon^{n}-1}{\varepsilon^2(1/\varepsilon-1)}<\beta$. Let us assume that for every $i\in E\setminus\{e\}$ we have $0<p_i\leq \delta$.    Let $\xi$ be such that $0<\xi<\varepsilon^{n-2}$ and \[ \xi\left(1+\frac{1/\varepsilon^{n}-1}{\varepsilon(1/\varepsilon-1)}\right)\leq \varepsilon\,.\]

     Then in polynomial time  in $n$, $\size(p_{i}\,:\, i\in E\setminus\{e\})$, $\size(\beta)$, $\size(1/\varepsilon)$, $\size(1/\delta)$, $\size(\xi)$  we can construct an instance of OLCPM on $\mathcal M$ of polynomial size with the following properties:
     \begin{enumerate}[(i)]
     \item In the constructed instance of OLCPM, there are at most $n+1$ critical values and they are among the values below, 
     
     \begin{itemize}
     
     \item $\gamma_0:=0$ and $\gamma_{n}:=1$

        \item $\gamma_1:=1-\varepsilon$, $\gamma_2:=1-\varepsilon^2$, \ldots, $\gamma_{n-2}:=1-\varepsilon^{n-2}$,
        \item $\gamma_{n-1}:=1-\xi$.
        
        \end{itemize}
     
     \item 
      In the constructed instance of OLCPM, the principal's utility for all critical values but $\gamma_0=0$ and $\gamma_{n-1}=1-\xi$ are at most $2\varepsilon \beta$. The principal's utility for the critical values $\gamma_0=0$ and $\gamma_{n-1}=1-\xi$ equals $\beta$ and $\rho'$, respectively,
     where $\rho'$ is some number in $ [\rho, \rho+\varepsilon]$ and $\rho$ is the outcome of the original \rpm instance.
     
     \end{enumerate}
    
     \end{lemma}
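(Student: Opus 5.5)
The plan is to build the OLCPM instance on $\mathcal M$ itself, giving every element a two-point distribution: value $0$ or one positive value. For such an element the grade is linear in $\alpha$ and changes sign at a prescribed point, so each element switches from ``never probed'' to ``probed'' at exactly one $\alpha$. Along with $0$ and $1$, these switching points will be the only breakpoints that matter for the principal.

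\textbf{Construction.} Order $E\setminus\{e\}$ as $i_1,\dots,i_{n-1}$.
\begin{itemize}
\item For $i_1$, take $X_{i_1}=v_1$ with probability $p_{i_1}$ (else $0$), cost $c_{i_1}=0$, and $v_1$ chosen so that $p_{i_1}v_1=\beta$. Then $i_1$ is always probed and kept, and at $\gamma_0=0$ the principal gets exactly $\beta$ at zero transfer. (If this clashes with the required form of the answer, I would instead let $i_1$ switch on at $\gamma_1=1-\varepsilon$.)
\item For $k\ge2$, take $X_{i_k}=v_k:=\varepsilon^{-k}$ with probability $p_{i_k}$ (else $0$), and cost $c_{i_k}=(1-\varepsilon^{k-1})p_{i_k}v_k$.
\item For $e$, take $X_e=w:=1/\xi$ deterministically, and cost $c_e=(1-\xi)w=1/\xi-1$.
\end{itemize}
For a Bernoulli-type element with $\tau\le \alpha v$, the defining equation gives $\tau_{i_k}(\alpha)=\alpha v_k-c_{i_k}/p_{i_k}=v_k\bigl(\alpha-(1-\varepsilon^{k-1})\bigr)$. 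Likewise $\tau_e(\alpha)=w\bigl(\alpha-(1-\xi)\bigr)$. So $i_k$ has nonnegative grade exactly for $\alpha\ge 1-\varepsilon^{k-1}$, and $e$ exactly for $\alpha\ge 1-\xi$. Using Lemma~\ref{lem:grade_piecewise_linear} and the piecewise analysis behind Lemma~\ref{lem:interval_partition}, I would check that the only relevant critical values are $0$, the points $1-\varepsilon^{k}$, $1-\xi$ and $1$, which is part (i). Every other crossing of the functions $\tau_i(\cdot)$ and $\alpha v_{j,k}$ happens while one of the grades is negative, so it does not change FRUGAL's behaviour.

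\textbf{FRUGAL run at $\alpha=1-\xi$.} Next I would trace FRUGAL, with the perturbation of Theorem~\ref{thm:perturbed_cost_best_response}. Whenever a Bernoulli element is probed and realizes a positive value, we have $X\ge\tau$, so it is accepted immediately. For $\alpha$ in the top interval, $\tau_e=0^+$ after perturbation and lies below every other active grade. Hence $e$ is probed last, after all $i_k$ have been probed and the existing ones greedily inserted. Since every element here is parallel-class-free, the set of existing $i_k$ is exactly the random set $T$, which has the distribution of the \rpm instance. At that point $e$ is accepted iff it is not spanned by $T$, which happens with probability $\rho$.

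The principal's utility at $\alpha=1-\xi$ is therefore
\[
\xi\Bigl(\beta+\sum_{k\ge2}p_{i_k}v_k+w\rho\Bigr)=\rho+\xi\Bigl(\beta+\sum_k p_{i_k}\varepsilon^{-k}\Bigr).
\]
The error term lies in $[0,\varepsilon]$ by the assumed bound $\xi\bigl(1+\tfrac{1/\varepsilon^{n}-1}{\varepsilon(1/\varepsilon-1)}\bigr)\le\varepsilon$ together with $p_i\le1$. This gives $\rho'\in[\rho,\rho+\varepsilon]$.

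\textbf{Intermediate critical values.} At $\alpha=1-\varepsilon^{k}$, only $i_1,\dots,i_{k+1}$ are active and $e$ is not probed. The utility is at most
\[
\varepsilon^{k}\Bigl(\beta+\sum_{j\le k+1}\delta\varepsilon^{-j}\Bigr).
\]
The hypothesis $\delta\frac{1/\varepsilon^{n}-1}{\varepsilon^2(1/\varepsilon-1)}<\beta$ makes the sum at most $\varepsilon^{k}\cdot \beta\varepsilon^{-k}$-order, so each such value is at most $2\varepsilon\beta$ (the $\varepsilon^{k}\beta\le\varepsilon\beta$ term plus a term below $\varepsilon\beta$). The value at $\alpha=1$ is $0$.

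\textbf{Main obstacle.} The delicate step is the FRUGAL trace: the tie-breaking via perturbed costs, and making sure the greedy order really probes $e$ last with accepted elements forming exactly the existing set. I would handle the ordering by checking, on each interval, the sign pattern of the grades. I expect the exponent bookkeeping (whether $v_k=\varepsilon^{-k}$ or $\varepsilon^{-(k-1)}$) to need adjusting to match the constants in the hypotheses exactly.
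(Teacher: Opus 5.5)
Your proposal is correct and follows the paper's construction: a zero-cost element with $p_{i_1}v_1=\beta$, Bernoulli elements whose grades vanish at $1-\varepsilon^{k-1}$, and a deterministic $e$ with value $1/\xi$ and cost $1/\xi-1$ that is probed last and accepted iff not spanned by $T$. Your choice $v_k=\varepsilon^{-k}$ in place of the paper's $\varepsilon^{-(k-1)}$ is harmless, and all bounds still close under the stated hypotheses. One correction: at $\alpha=1-\xi$ the utility is not literally $\xi\bigl(\beta+\sum_k p_{i_k}v_k\bigr)+\rho$, because existing elements of $E\setminus\{e\}$ can be dependent (e.g.\ the parallel copies produced by Lemma~\ref{lem:RPM_cleanup}) and then are not all accepted; as in the paper, you only need $\rho\le U_P(t_{1-\xi})\le\rho+\xi\bigl(\beta+\sum_k p_{i_k}v_k\bigr)$, which holds because the set accepted before $e$ always spans $T$.
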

\begin{proof}

    Let us design the following OLCPM on the matroid $\mathcal{M}$. We identify $E\setminus \{e\}$ with the indices in $[n-1]$ and we identify $e$ with the index~$n$. 

     For each $i\in E$ we define a cost $c_i$ and we define a distribution $\mathcal F_i$ that consists of two outcomes: outcome $v_i$ with probability $p_i$ and outcome $0$ otherwise. For $i\in E\setminus \{e\}$ the values $v_i$ and  $c_i$ are going to be defined through the construction, while the values $p_i$ are taken from the original \rpm instance.
    
    For the element $i=1$, let us define  the cost $c_1:=0$ and the value $v_1:=\beta/p_1$. For every element $i= 2,\ldots,n-1$, let us define $c_i:=(1/\varepsilon^{i-1}-1) p_i$ and $v_i:=1/\varepsilon^{i-1}$. 
    
     The functions $\tau_i(\cdot)$ are such that for $i=2,\ldots,n-1$ we have $\tau_i(1-\varepsilon^{i-1})=0$. So for each $i\in [n-1]$, we have $\tau_i(\gamma_{i-1})=0$ and the function $\tau_i(\cdot)$ is linear on $[\gamma_{i-1},1]$. Also we have $\tau_i(1)=1$ for $i=2,\ldots, n$ and we have  $\tau_1(1)=+\infty>1$. 
     
     For the element $i=n$ we define $p_n:=1$, $c_n:=(1-\xi)/\xi$ and $v_n:=1/\xi$. We have $\tau_n(\gamma_{n-1})=0$,  $\tau_n(1)=1$ and the function $\tau_n(\cdot)$ is linear on $[\gamma_{n-1},1]$.

     Note that for every $i\in [n-1]$ we have $\gamma_i v_i\geq 1$. Indeed, for $i=1$ we have 
 \[
\gamma_1 v_1=  (1-\varepsilon)\frac{\beta}{\delta}\geq (1-\varepsilon) \frac{1/\varepsilon^{n}-1}{\varepsilon^2(1/\varepsilon-1)}  \geq 1\,.
 \]
 For $i=2,\ldots,n-2$ we have
  \[
\gamma_i v_i=  (1-\varepsilon^i)\frac{1}{\varepsilon^i} =\frac{1}{\varepsilon^i}  -1 \geq 1\,.
 \]
 
Let us make the crucial observation for the proof.  For every $\alpha \in [0,1)$ the element $n$ is not probed if and only if $\tau_n(\alpha)<0$ or the set $\{i\in [n-1]\,:\, X_i=v_i\}$ contains the element $n$ in its span with respect to the matroid $\mathcal M$. This is due to the fact that for each  $i\in [n-1]$ and every $\alpha\in [0,1)$ if $\tau_n(\alpha)\geq 0$ then we have $\alpha v_i\geq 1=\tau_n(1)$.
     
    Now it is straightforward to check that  we have only $n+1$ critical threshold values, each belonging to one of three types below:
    \begin{enumerate}[(i)]
        \item $\gamma_0=0$ and $\gamma_{n}=1$

        \item $\gamma_1=1-\varepsilon$, $\gamma_2=1-\varepsilon^2$, \ldots, $\gamma_{n-2}=1-\varepsilon^{n-2}$
        \item $\gamma_{n-1}=1-\xi$.
    \end{enumerate} 
    
   Let $\nu_i$, $i=0,\ldots, n$ denote the principal's utility $U_P(t_{\gamma_i})$ corresponding to the value $\gamma_i$. To estimate the utility note that the element $i\in[n]$ can be probed only for critical values that are at least $\gamma_{i-1}$. 
   \begin{enumerate}[(i)]
        \item $\nu_0=\beta$ and $\nu_{n+1}=0$

        \item for $i=1,\ldots, n-2$, we have
        \begin{align*}
        \nu _i\leq &\varepsilon^{i}(v_1p_1+v_2p_2+v_3p_3+\ldots+v_{i+1}p_{i+1})\leq\\
        &\varepsilon^{i}(\beta+\delta/\varepsilon^{1}+\delta/\varepsilon^{2}+\ldots+ \delta/\varepsilon^{i})\leq\\ 
        &\varepsilon \beta+\delta\frac{1/\varepsilon^{i}-1}{\varepsilon(1/\varepsilon-1)}\leq 2\varepsilon\beta
        \end{align*}
        \item  for $i=n-1$, we have\[\nu_{n-1}\geq(1-\gamma_{n-1})v_n p_n\rho=\rho  \]
        and we also have 
        \begin{align*}
        \nu_{n-1}\leq (1-\gamma_{n-1})(v_1p_1+v_2p_2+v_3p_3+\ldots+v_{n-1}p_{n-1})+(1-\gamma_{n-1})v_np_n\rho \leq\\
        \xi\left(\beta+\frac{1/\varepsilon^{n}-1}{\varepsilon(1/\varepsilon-1)}\right)+ \rho\leq \varepsilon+\rho\,.  
        \end{align*}
        Thus, we have \[\rho\leq \nu_{n-1}\leq \rho+\varepsilon\,.\]
    \end{enumerate}

\qed\end{proof}

\begin{remark}
In the proofs of  Theorem~\ref{thm:OLCPM<-RPM} and Theorem~\ref{thm:OLCPM<-RPM approx}, we assume that oracles for solving OLCPM output a critical value. Note that these proofs are working also when the oracles output not a critical value. This is due to the fact that for two consecutive critical values $\gamma_j$, $\gamma_{j+1}$ we have that $U_P(t_\alpha)$ is linear function of $\alpha$ on $[\gamma_j, \gamma_{j+1})$. Also in Lemma~\ref{lem:RPM_reduction} for all critical values except $\gamma_0=0$ and $\gamma_{n-1}=1-\xi$ the principal's utility is at most $2\varepsilon\beta$. So from an OLCPM oracle that is not restricted to output a critical value, we can always obtain an OLCPM oracle that outputs a critical value by checking whether the output is in the interval $[\gamma_0, \gamma_1)$ or $(\gamma_{n-2}, \gamma_n)$.
\end{remark}

\begin{proof}[Proof of Theorem~\ref{thm:OLCPM<-RPM}]

Note that we can efficiently check whether the answer for an \rpm instance equals $0$. This is because the answer is $0$ if and only if the special element $e$ is spanned by the elements $\{i\in E\setminus\{e\}\,:\, p_i=1\}$ with respect to the matroids $\mathcal M$.

Let us be given an instance of \rpm  for a matroid $\mathcal{M} = (E,\mathcal{I})$ with $|E| = n\geq 2$ and  an element $e\in E$. Let us be given a probability $p_i=r_i/d_i$ of existing for each $i\in E\setminus\{e\}$, where both $r_i$ and $d_i$ are natural numbers.   Let $\rho$ be the answer to the given \rpm. 
Then we define $\Lambda:=\Pi_{i\in E\setminus\{e\}}d_i$ and
$\varepsilon:=\frac{1}{4\Lambda}$.

Then using oracle calls for solving OLCPM and  binary search on $\beta$ from \[\{z/\Lambda\,:\, z=1, \ldots, \Lambda-1, \Lambda\},\,\]
we can determine the value of $\rho$ due to Lemma~\ref{lem:RPM_cleanup} and Lemma~\ref{lem:RPM_reduction}. For Lemma~\ref{lem:RPM_cleanup} and Lemma~\ref{lem:RPM_reduction}, we select $\delta$ and $\xi$ to satisfy the conditions  $\delta\frac{1/\varepsilon^{n}-1}{\varepsilon^2(1/\varepsilon-1)}<\beta$ and  
$ \xi\left(1+\frac{1/\varepsilon^{n}-1}{\varepsilon(1/\varepsilon-1)}\right)\leq \varepsilon$
in the statement of  Lemma~\ref{lem:RPM_reduction}. 

Using binary search we can find two consecutive values $\beta'$, $\beta''$, $\beta'<\beta''$  in the set  \[\{z/\Lambda\,:\, z=1, \ldots, \Lambda-1, \Lambda\},\,\]
such that for $\beta'$ the optimal contract corresponds to $0$ and for $\beta''$ the optimal contract corresponds to $1-\xi$ for $\xi$ given by $\beta''$. Then $\beta''$ is the correct answer for the original \rpm instance.
    
\qed\end{proof}

\begin{proof}[Proof of Theorem~\ref{thm:OLCPM<-RPM approx}]

As in the proof of Theorem~\ref{thm:OLCPM<-RPM}, we can efficiently check wether the answer for the \rpm instance equals $0$.

Let us be given an instance of \rpm  for a matroid $\mathcal{M} = (E,\mathcal{I})$ with $|E| = n\geq 2$ and  an element $e\in E$. Let us be given a probability $p_i=r_i/d_i$ of existing for each $i\in E\setminus\{e\}$, where both $r_i$ and $d_i$ are natural numbers.   Let $\rho$ be the answer to the given \rpm instance. 

Let us show that using polynomial number of oracle calls that approximate  OLCPM up to the factor $(1+\psi)$ we can approximate $\rho$ up to the factor $(1+\psi)^2$. For this we again define $\Lambda:=\Pi_{i\in E\setminus\{e\}}d_i$, and let us further define $\varepsilon$ to be a positive number not larger than $\psi/\Lambda$, i.e. $1/\Lambda+\varepsilon\leq (1+\psi)/\Lambda$. 

Then using oracle calls for solving OLCPM and a binary search on $\beta$ from 
\[\{(1+\psi)^z/\Lambda\,:\, z=0, \ldots, \lceil \log_{1+\psi}\Lambda\rceil\},\,\]
we can determine the value of $\rho$ using Lemma~\ref{lem:RPM_cleanup} and Lemma~\ref{lem:RPM_reduction}. 
For Lemma~\ref{lem:RPM_cleanup} and Lemma~\ref{lem:RPM_reduction}, we select $\delta$ and $\xi$ to satisfy the conditions  $\delta\frac{1/\varepsilon^{n}-1}{\varepsilon^2(1/\varepsilon-1)}<\beta$ and  
$ \xi\left(1+\frac{1/\varepsilon^{n}-1}{\varepsilon(1/\varepsilon-1)}\right)\leq \varepsilon$
in the statement of  Lemma~\ref{lem:RPM_reduction}. 

Using binary search we can find two consecutive values $\beta'$, $\beta''$, $\beta'<\beta''$  in the set  \[\{(1+\psi)^z/\Lambda\,:\, z=0, \ldots, \lceil \log_{1+\psi}\Lambda\rceil\},\,\]
such that for $\beta''$ the $(1+\psi)$-approximately optimal contract output by the OLCPM oracle corresponds to $0$ and for $\beta'$ the $(1+\psi)$-approximately optimal contract corresponds to $1-\xi$ for $\xi$ given by $\beta'$.  Thus, we have $\beta'\leq (1+\psi)(\rho+\varepsilon)$ and $(1+\psi)\rho\leq \beta''$.
Note that by the definition of $\varepsilon$ we have $\rho+\varepsilon\leq (1+\psi)\rho$, since $\rho\geq 1/\Lambda$. Hence, 
\[(1+\psi)\rho\leq \beta''\quad\text{ and }\quad \beta'\leq (1+\psi)^2\rho,\,\]
and so both $\beta'$ and $\beta''$ provides an $(1+\psi)^2$-approximation for $\rho$. 

\qed\end{proof}

\subsection{Reducing  OLCPM to \rpm}
\label{sec:OLCPM->RPM}

To reduce OLCPM to \rpm and prove Theorem~\ref{thm:OLCPM->RPM} and Theorem~\ref{thm:OLCPM->RPM approx} it is enough to show how to  estimate $U_P(t_\alpha)$ for each critical value $\alpha$. For this, given $\alpha\in (0,1)$ we show how to estimate the probability that FRUGAL Algorithm accepts the element $i\in E$ with the value $v_{i,k}$ for all $i\in E$ and $k=1,\ldots,m$.

Since FRUGAL Algorithm uses comparison of different values we do a tie breaking between the surrogates and grades as follows. For the comparison, we associate $E$ with $\{1, 2,\ldots, n\}$. We will assume that we have a tie-breaking comparison that works in the following way, for $i$, $j\in E$, $i\neq j$ we have $\tau_i(\alpha) \succ
\tau_j(\alpha)$ if $\tau_i(\alpha) >
\tau_j(\alpha)$ or if $\tau_i(\alpha) =
\tau_j(\alpha)$ and $i<j$. Similarly, for $i$, $j\in E$, $i\neq j$ we have $Y_i(\alpha) \succ
Y_j(\alpha)$ if $Y_i(\alpha) >
Y_j(\alpha)$ or if $Y_i(\alpha) =
Y_j(\alpha)$ and $i<j$. Finally, for $i$, $j\in E$, $i\neq j$ we have $Y_i(\alpha) \succ
\tau_j(\alpha)$ if $Y_i(\alpha) \geq 
\tau_j(\alpha)$.  We define $\argmax^\succ$ as $\argmax$ except that the ties are broken according to $\succ$.

Let us be given a contract $t_\alpha$ and a number $\xi>0$. Let $r_{i,k}(\alpha)$, $i\in E$, $k=1,\ldots,m$ be the probability of the event that the policy produced by the FRUGAL algorithm (with tie-breaking based on $\succ$) includes $i$ in the output  conditioned on $X_i=v_{i,k}$, i.e.\ $r_{i,k}(\alpha)$ is the probability  that $i\in I$ at the end of the FRUGAL algorithm conditioned on $X_i=v_{i,k}$.

First, let us state when an element $i\in E$ with $X_i=v_{i,k}$, $k=1,\ldots,m$ is included in the output. The next remark is based on the structure of the FRUGAL Algorithm, i.e.\ how it probes elements and how it includes them in the output set $I$.

Recall, that when the linear contract $t_\alpha$ is clear from the context we write $r_{i,k}$, $\tau_i$, $Y_i$ for $i\in E$ and $k=1,\ldots,m$.

\begin{remark}\label{rem:element_spanned}
    Let us be given a contract $t_\alpha$. Let us be given an element $i\in E$ with $X_i=v_{i,k}$, $k=1,\ldots,m$. Then the element $i$ is included in the output of the FRUGAL Algorithm with tie-breaking defined by $\succ$ if and only if $\tau_i\geq 0$ and $i$ does not lie in the span of the following set
    \[
    \left\{j\,:\, \tau_j\succ\tau_i\,, Y_j\succ \tau_i,\, j\neq i\right\}\cup \left\{j\,:\, \tau_i\succ \tau_j\succ Y_i\,, Y_j\succ Y_i,\, j\neq i\right\}\,.
    \]
\end{remark}
\begin{proof}

Note, the element $i$ cannot be probed if $\tau_i<0$.
The set of elements \[\left\{j\,:\, \tau_j\succ\tau_i\,, Y_j\succ \tau_i,\, j\neq i\right\}\] corresponds to the elements that are included in $I$ before $i$ is probed. 
The set of elements \[\left\{j\,:\, \tau_i\succ \tau_j\succ Y_i\,, Y_j\succ Y_i,\, j\neq i\right\}\] corresponds to the elements that are included in $I$ after $i$ is probed and before $i$ is considered for being included in $I$. 
    
\qed\end{proof}

Using Remark~\ref{rem:element_spanned}, it is straightforward to reduce the computation of the probability that the element $i\in E$ is accepted by the FRUGAL Algorithm with the value $v_{i,k}$, $k=1,\ldots,m$ to an \rpm problem of polynomial size with respect to the size of the original OLCPM problem.

\section{Optimal Linear Contracts for Special Cases}
In this section, we continue the discussion from Section~\ref{sec:OLCPM->RPM} to prove Theorem~\ref{thm:FPRAS_special_cases}. For this we show how to estimate $U_P(t_\alpha)$ for every critical value $\alpha$ in these special cases of OLCPM.

\subsection{Balanced Outcome Values}

For both special cases we consider a straightforward Monte Carlo algorithm.  Lemma~\ref{lem:sampling} formalizes that when the answer in the instance of \rpm (corresponding to some instance of OLCPM and some $\alpha$) is not too small then it can be efficiently approximated using several independent runs of Algorithm~\ref{ALG:FRUGAL}. Here, in the independent runs of Algorithm~\ref{ALG:FRUGAL} we do tie-breaking defined by $\succ$. For the sake of completeness we include the proof.

\begin{lemma}\label{lem:sampling}
Let us be given a contract $t_\alpha$ and numbers $\mu > 0, \xi>0$. For each $i\in E$ let $\rho_{i,k}$, be the number computed by Algorithm~\ref{ALG:FRUGAL_SAMPLE}, which is a random variable due to sampling.

Then for each $i\in E$ such that $r_{i,k}\geq \mu$ we have $\Pr[|\rho_{i,k}-r_{i,k}|\geq \mu\cdot r_{i,k}]\leq \frac{1}{80m^3n^3}$.
\end{lemma}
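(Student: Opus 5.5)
We will prove Lemma~\ref{lem:sampling} with a standard multiplicative Chernoff bound. Algorithm~\ref{ALG:FRUGAL_SAMPLE} is not shown above, but we assume it does the natural thing. It fixes $i\in E$ and $k$, and draws $N$ independent samples of the remaining variables $X_j$, $j\neq i$, from $\mathcal F_j$. In each sample it sets $X_i:=v_{i,k}$ and runs Algorithm~\ref{ALG:FRUGAL} with the tie-breaking $\succ$ (for the perturbed costs as in Remark~\ref{rem:frugal_for_perturbed_costs}). It then outputs as $\rho_{i,k}$ the fraction of runs in which $i$ ends up in $I$. The number of samples $N$ will be polynomial in $n$, $m$ and $1/\mu$.

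\textbf{Step 1: unbiasedness.} Let $Z_\ell$ be the indicator that $i\in I$ in run $\ell$. Because the variables $X_j$ are independent, fixing $X_i=v_{i,k}$ and sampling the other $X_j$ from their distributions gives exactly the conditional law given $X_i=v_{i,k}$. FRUGAL with the fixed tie-breaking $\succ$ is a deterministic function of the realization; equivalently, by Remark~\ref{rem:element_spanned}, $Z_\ell$ is determined by the span condition, which depends only on the realization. Hence the $Z_\ell$ are i.i.d.\ Bernoulli variables with mean $r_{i,k}$, and $\mathbb{E}[\rho_{i,k}]=r_{i,k}$.

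\textbf{Step 2: concentration.} The multiplicative Chernoff bound gives
\[
\Pr\bigl[|\rho_{i,k}-r_{i,k}|\geq \mu r_{i,k}\bigr]\leq 2\exp\bigl(-\mu^2 N r_{i,k}/3\bigr)\,,
\]
assuming $\mu\le 1$; if $\mu>1$ we replace $\mu$ by $\min(\mu,1)$, which only strengthens the event. Using $r_{i,k}\geq\mu$, the right-hand side is at most $2\exp(-\mu^3N/3)$.

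\textbf{Step 3: choosing $N$.} We pick $N:=\lceil 3\mu^{-3}\ln(160m^3n^3)\rceil$. Then $2\exp(-\mu^3N/3)\leq 1/(80m^3n^3)$, as required. This $N$ is polynomial in $n$, $m$ and $1/\mu$, and each run of FRUGAL takes polynomial time, so the sampler is efficient.

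The only delicate point is making sure the algorithm's $N$ matches this choice. The Chernoff calculation itself is routine; the real care lies in Step 1, namely justifying that the indicator is well defined given the deterministic tie-breaking and that conditioning on $X_i$ is faithfully simulated. If Algorithm~\ref{ALG:FRUGAL_SAMPLE} uses a different sample count, for example one scaling with $\mu^{-2}$ times some other parameter, we would redo Step 3 with that count.
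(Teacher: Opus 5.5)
Your argument is correct. It differs from the paper's proof in the concentration tool, and that difference affects the sample count.

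\textbf{The paper's route.} The paper uses Chebyshev. With $N=80m^3n^3/\mu^4$ rounds, $\mathrm{Var}(\rho_{i,k})\le 1/N$, so
\[
\Pr[|\rho_{i,k}-r_{i,k}|\ge \mu r_{i,k}]\le \frac{1}{N\mu^2 r_{i,k}^2}\le \frac{1}{N\mu^4}=\frac{1}{80m^3n^3},
\]
where $r_{i,k}\ge\mu$ turns $\mu^2r_{i,k}^2$ into $\mu^4$. This is why Algorithm~\ref{ALG:FRUGAL_SAMPLE} has the factors $\mu^{-4}$ and $80m^3n^3$ in its loop bound.

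\textbf{What your route buys.} Your multiplicative Chernoff bound gives exponential tails. So $O(\mu^{-3}\log(mn))$ samples already suffice, and the failure probability can be lowered at only logarithmic cost. Chebyshev needs nothing beyond a variance bound, but it pays polynomially in the inverse failure probability.

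\textbf{Fixing Step~3.} The lemma concerns the specific $\rho_{i,k}$ produced by Algorithm~\ref{ALG:FRUGAL_SAMPLE}, so you should not choose $N$ yourself. Instead, note three facts:
\begin{itemize}
\item your bound $2\exp(-\mu^3N/3)$ is decreasing in $N$;
\item $80m^3n^3/\mu^4\ge 3\mu^{-3}\ln(160m^3n^3)$ whenever $\mu\le 1$;
\item the case $\mu>1$ is vacuous, since $r_{i,k}\le 1$.
\end{itemize}
Hence your argument covers the paper's algorithm as written.

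\textbf{Two minor points.} First, the paper's sampler does not rerun FRUGAL for each $(i,k)$. It draws one joint sample per round, substitutes $v_{i,k}$ for $X_i$, and tests the span condition of Remark~\ref{rem:element_spanned}. That test depends only on $(X_j)_{j\neq i}$, so for fixed $(i,k)$ the per-round indicators are still i.i.d.\ Bernoulli with mean $r_{i,k}$, and your Step~1 applies unchanged. Second, you read $\rho_{i,k}$ as the fraction of successful rounds, which is what the paper's proof relies on. The pseudocode's update $R_{i,k}\gets R_{i,k}+p_{i,k}$ must be read as an increment by $1$. Otherwise $\mathbb{E}[\rho_{i,k}]=p_{i,k}r_{i,k}$ and the stated bound would fail.
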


\begin{proof}

    Algorithm~\ref{ALG:FRUGAL_SAMPLE} corresponds to running $80m^3n^3/\mu^4$ times Algorithm~\ref{ALG:FRUGAL} and using $R_{i,k}$ to count how many times the element $i$ is accepted conditioned on $X_i=v_{i,k}$ for each $i\in E$, $k=1,\ldots,m$. So, we have $r_{i,k}=\mathbb{E}[\rho_{i,k}]$. Also since $R_{i,k}$ is a sum of $80m^3n^3/\mu^4$ independent binary random variables we have \[\var(\rho_{i,k})=\left(\frac{\mu^4}{80m^3n^3}\right)^2 \var(R_{i,k})\leq \left(\frac{\mu^4}{80m^3n^3}\right)\,.\] 
    Thus by Chebyshev's inequality, if $r_{i,k}\geq \mu$ we have 
    \[\Pr[|\rho_{i,k}-r_{i,k}|\geq \mu\cdot r_{i,k}]\leq \frac{\mu^4}{80m^3n^3}\cdot\frac{1}{\mu^2 r_{i,k}^2}\leq \frac{\mu^4}{80m^3n^3}\cdot\frac{1}{\mu^4} \leq \frac{1}{80m^3n^3}\,,\]
    finishing the proof.
    
\qed\end{proof}

\begin{algorithm}[h] 
\caption{Sampling Algorithm}
\label{ALG:FRUGAL_SAMPLE}
\begin{algorithmic}[1] 
\State Set $t\gets 0$
\State For each $i\in E$ and $k=1,\ldots,m$ set $R_{i,k}\leftarrow 0$
\While{$t\leq 80m^3n^3/\mu^4$}
\State Set $t\leftarrow t+1$
\State For each $i\in E$ draw the value $X_i\sim \mathcal F_i$ independently, and set $Y_i \gets \min\{\alpha\cdot X_i,\tau_i\}$
\For{each $i\in E$ and each $k=1,\ldots,m$}
    \State $Y'_i \gets \min\{\alpha\cdot v_{i,k},\tau_i\}$
    \State $S \gets \left\{j\,:\, \tau_j\succ\tau_i\,, Y_j\succ \tau_i,\, j\neq i\right\}\cup \left\{j\,:\, \tau_i\succ\tau_j\succ Y'_i\,, Y_j\succ Y'_i,\, j\neq i\right\}$ 
    \If{$\tau_i\geq 0$ and $i$ is not spanned by $S$}
        \State $R_{i,k} \gets R_{i,k} + p_{i,k}$
    \EndIf
\EndFor
\EndWhile
\State For each $i\in E$ set $\rho_{i,k}\leftarrow R_{i,k}/t$
\end{algorithmic}
\end{algorithm}

Intuitively, the next lemma shows that the only obstacle in estimating the principal's utility is due to the elements that can have large expected value but small probability of being accepted with this value due to being ``blocked'' through other elements.

\begin{lemma}\label{lem:ignoring}
    Let us be given a contract $t_\alpha$ and numbers $\mu>0$, $\varepsilon>0$ with $2\mu<\varepsilon$. If for each $i\in I$ with $\tau_i\geq 0$ and each $k=1,\ldots,m$, we have $(1-\alpha)v_{i,k}p_{i,k}\leq \frac{1}{160m^4n^4} U_P(t_{\alpha}) \frac{\varepsilon}{\mu}$ or $r_{i,k}\geq \mu$ then with probability at least $1-\frac{1}{80 m^2n^2}$ we have
    \[
    (1-\varepsilon) U_P(t_\alpha)\leq(1-\alpha)\sum_{\substack{i\in E,\\k=1,\ldots,m }}v_{i,k} p_{i,k} \rho_{i,k}\leq (1+\varepsilon) U_P(t_\alpha)\,,
    \]
    where $\rho_{i,k}$ are the random variable computed by Algorithm~\ref{ALG:FRUGAL_SAMPLE}.
\end{lemma}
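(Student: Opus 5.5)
We first write $U_P(t_\alpha)$ exactly in terms of the quantities $r_{i,k}$. By the definition of $r_{i,k}$ (the FRUGAL policy with tie-breaking $\succ$, which we take as the agent's best response as in Remark~\ref{rem:frugal_for_perturbed_costs}),
\[
U_P(t_\alpha)=(1-\alpha)\sum_{i\in E,\,k=1,\ldots,m} v_{i,k}\,p_{i,k}\,r_{i,k}\,,
\]
and by Remark~\ref{rem:element_spanned} only elements with $\tau_i\geq 0$ contribute. We then split the index pairs $(i,k)$ with $\tau_i\ge 0$ into two classes: the class $A$ of pairs with $r_{i,k}\geq\mu$, and the class $B$ of the remaining pairs. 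For every pair in $B$ the hypothesis gives $(1-\alpha)v_{i,k}p_{i,k}\leq \frac{1}{160m^4n^4}U_P(t_\alpha)\frac{\varepsilon}{\mu}$.

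For the pairs in $A$ we apply Lemma~\ref{lem:sampling} and take a union bound over the at most $nm$ pairs. With probability at least $1-\frac{nm}{80m^3n^3}\geq 1-\frac{1}{80m^2n^2}$ we then have $|\rho_{i,k}-r_{i,k}|\leq \mu r_{i,k}$ simultaneously for all $(i,k)\in A$. On this event the $A$-part of the estimator lies within a factor $1\pm\mu$ of the $A$-part of $U_P(t_\alpha)$, and the latter is at most $U_P(t_\alpha)$. This contributes an additive error of at most $\mu U_P(t_\alpha)$.

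For the pairs in $B$ we need to control both the true contribution $(1-\alpha)v_{i,k}p_{i,k}r_{i,k}$ and the estimate $(1-\alpha)v_{i,k}p_{i,k}\rho_{i,k}$. The true contribution is at most $\frac{1}{160m^4n^4}U_P(t_\alpha)\frac{\varepsilon}{\mu}\cdot\mu$, since $r_{i,k}<\mu$. Summed over at most $nm$ pairs this is at most $\frac{\varepsilon}{160m^3n^3}U_P(t_\alpha)$, which is negligible.

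The estimate is the main obstacle, because $\rho_{i,k}$ is not guaranteed to be small even though $r_{i,k}<\mu$. We would handle it with Markov's or Chebyshev's inequality. Since $\mathbb{E}[\rho_{i,k}]=r_{i,k}<\mu$ and $\var(\rho_{i,k})\le \mu^4/(80m^3n^3)$, Chebyshev gives $\Pr[\rho_{i,k}\geq 2\mu]\leq \frac{\mu^2}{80m^3n^3}$; alternatively Markov gives a bound on $\rho_{i,k}\le \mu\cdot 160 m^3n^3/\varepsilon$-type thresholds. Folding these into the same union bound, with the constants adjusted, keeps the failure probability within $\frac{1}{80m^2n^2}$. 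On the good event each $B$-term of the estimator is then at most $\frac{1}{160m^4n^4}U_P(t_\alpha)\frac{\varepsilon}{\mu}\cdot 2\mu$, so the total over $B$ is at most $\frac{\varepsilon}{80m^3n^3}U_P(t_\alpha)$.

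Combining the two classes, the estimator differs from $U_P(t_\alpha)$ by at most $\bigl(\mu+\frac{\varepsilon}{80m^3n^3}\bigr)U_P(t_\alpha)$. Using $2\mu<\varepsilon$, this is at most $\varepsilon U_P(t_\alpha)$, which gives the two-sided bound in the statement.
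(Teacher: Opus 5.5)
Your proposal is correct and follows the paper's proof: the same split of the pairs $(i,k)$ into $r_{i,k}\geq\mu$ and $r_{i,k}<\mu$, Lemma~\ref{lem:sampling} on the first class, the hypothesis plus a tail bound on $\rho_{i,k}$ on the second (bounding both the true and the estimated contribution), and a union bound over the at most $nm$ pairs. The only deviations are minor: for the small pairs you use Chebyshev with threshold $2\mu$ (via the variance bound $\var(\rho_{i,k})\leq \mu^4/(80m^3n^3)$ from the proof of Lemma~\ref{lem:sampling}) where the paper uses Markov with threshold $80m^3n^3\mu$, and you keep the relative error $\mu$ rather than relaxing it to $\varepsilon/2$; each pair still fails with probability at most $\frac{1}{80m^3n^3}$, so your union bound needs no adjustment of constants.
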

\begin{proof}
To prove the  lemma we use the same notation as in Lemma~\ref{lem:sampling}.
Note that $U_P(t_\alpha)$ equals \[(1-\alpha)\sum_{\substack{i\in E,\\k=1,\ldots,m }}v_{i,k} p_{i,k} r_{i,k}\,.\]
Thus we can compare each summand separately. For each $i\in I$ and $k=1,\ldots,m$ if $r_{i,k}\geq \mu$ then by Lemma~\ref{lem:sampling} we have 
\[\Pr[|\rho_{i,k}-r_{i,k}|\geq \frac{1}{2}\varepsilon\cdot r_{i,k}]\leq \frac{1}{80m^3n^3}\,.\]
So for each $i\in I$ and $k=1,\ldots,m$ if $r_{i,k}\geq \mu$ then with probability at least $1-\frac{1}{80m^3n^3}$ we have
\[ \left(1-\frac{1}{2}\varepsilon\right) (1-\alpha) v_{i,k}p_{i,k} r_{i,k}\leq (1-\alpha) v_{i,k}p_{i,k} \rho_{i,k}\leq \left(1+\frac{1}{2}\varepsilon\right) (1-\alpha) v_{i,k}p_{i,k} r_{i,k}\]

For each $i\in I$ and $k=1,\ldots,m$ if $r_{i,k}< \mu$ then by Markov's inequality we have 
\[
\Pr[\rho_{i,k}\geq 80 m^3n^3 \mu]\leq \frac{1}{80 m^3n^3}\,.
\]
Hence, for each $i\in I$ and $k=1,\ldots,m$ if $r_{i,k}< \mu$  then with probability at least $1-\frac{1}{80m^3n^3}$ we have
\[
0\leq (1-\alpha)v_{i,k}p_{i,k} \rho_{i,k}\leq (1-\alpha)\left(\frac{1}{160m^4n^4(1-\alpha)} U_P(t_{\alpha}) \frac{\varepsilon}{\mu}\right) \left(80 m^3n^3\mu\right) 
\leq  \frac{\varepsilon}{2mn}U_P(t_{\alpha})\,,\]
i.e.\ we have 
    \[
0\leq (1-\alpha)v_{i,k}p_{i.k} \rho_{i,k}\leq  \frac{\varepsilon}{2mn}U_P(t_{\alpha})\,.\] 
Note, that a similar estimation shows that if $r_{i,k}<\mu$ then $(1-\alpha)v_{i,k}p_{i,k} r_{i,k}$ is at most $\frac{\varepsilon}{2mn}U_P(t_{\alpha})$. So for each  $i\in I$ and $k=1,\ldots,m$ if $r_{i,k}< \mu$ then both $(1-\alpha)v_{i,k}p_{i,k} r_{i,k}$ and $(1-\alpha)v_{i,k}p_{i,k} \rho_{i,k}$ are at most $\frac{\varepsilon}{2mn}U_P(t_{\alpha})$ with probability at least $1-\frac{1}{80m^3n^3}$. Thus, with probability at least $1-\frac{1}{80m^3n^3}$ we have $(1-\alpha)v_{i,k}p_{i,k} |r_{i,k}-\rho_{i,k}|\leq \frac{\varepsilon}{2mn}U_P(t_{\alpha})$.

Now, the statement of the lemma follows by the union bound.

\qed\end{proof}

\begin{proof}[Proof of Theorem~\ref{thm:FPRAS_special_cases} Case~\eqref{thm:FPRAS_special_cases balanced}]
    Let us assume that there exists $\omega>0$ of polynomial size with respect to the size of OLCPM such that  \[\frac{\max_{k=1,\ldots m}v_{i,k}p_{i,k}}{\max_{k=1,\ldots m}v_{j,k}p_{j,k}}\leq \omega\,\]
    for all distinct $i$, $j\in E$ such that $\mathcal F_j$ has in support not only $0$. Let us define $\mu$ as follows
    \[\mu:=\frac{\varepsilon}{2\omega}\frac{1}{160m^4n^4}\]
    
    For every $\alpha$, $i\in E$ and $k=1,\ldots,m$ if $r_{i,k}<\mu$ and $\tau_i\geq 0$ then we have \[U_P(t_\alpha)\geq (1-\alpha)(1-\mu) \min_{j\in E}\left(\sum_{k=1}^m v_{j,k}p_{j,k}\right)\,.\] Indeed, this is due to the fact that $U_P(t_\alpha)$ is produced by the FRUGAL Algorithm attempting to output at least one element (other than the element  $i$) with probability at least $1-\mu$. Then we have \[(1-\alpha)v_{i,k}p_{i,k}\leq (1-\alpha)v_{i,k}\leq (1-\alpha)\omega\left(\min_{j\in E}\sum_{k=1}^m v_{j,k}p_{j,k}\right) \leq \frac{1}{160m^4n^4} U_P(t_{\alpha}) \frac{\varepsilon}{\mu}\,.\]
    So by Lemma~\ref{lem:ignoring} for each $\alpha$ with probability 
    at least $1-\frac{1}{80m^2n^2}$ we have    \begin{equation}\label{eq:sample_correct_critical_value}
    (1-\varepsilon) U_P(t_\alpha)\leq(1-\alpha)\sum_{\substack{i\in E,\\k=1,\ldots,m }}v_{i,k} p_{i,k} \rho_{i,k}\leq (1+\varepsilon) U_P(t_\alpha)\,.
    \end{equation} In total, we have at most $10n^2m^2+1$ critical values by Lemma~\ref{lem:critical_values} that can be computed efficiently. Thus, by union bound  with probability at least $3/4$  we have that~\eqref{eq:sample_correct_critical_value} holds for each critical value $\alpha$.
    
\qed\end{proof}

\subsection{Constant Number of Outcome Values}

The next lemma studies a linear contract such that there is an element which contributes a large value $U$ to the utility of the principal (conditioned on the element being probed), but the element's probability to be  probed is small. In this case, the lemma shows that another linear contract achieves a  principal's utility comparable to $U$ (up to loss of a certain factor).
\begin{lemma}\label{lem:change_critical_value}
    Let us be given a contract $t_\alpha$ and numbers $\varepsilon>0$ and $U$. For each $i\in E$ the distribution $\mathcal F_i$ corresponds to taking value $0$ with probability $1-p_i$ and value $v_i$ with probability $p_i$; and  the cardinality of $\{v_i\,:\, i\in E\}$ is at most $T$. For some $i\in E$ we have that $\tau_i(\alpha)\geq 0$ and $(1-\alpha)v_ip_i\geq U$. Then there exists a linear contract $t_{\alpha'}$ such that $U_P(t_{\alpha'})\geq U/(2n)^T$.
\end{lemma}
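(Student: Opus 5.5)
We argue by induction on the number $T'$ of distinct values $v_j$ that are strictly larger than $v_i$. The claim we prove is that there exists $\alpha'$ with $U_P(t_{\alpha'})\geq U/(2n)^{T'+1}$. Since $T'\leq T-1$, this gives the lemma.

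\textbf{Setup.} Fix the element $i$ from the statement, with $\tau_i(\alpha)\geq 0$ and $(1-\alpha)v_ip_i\geq U$. By Remark~\ref{rem:element_spanned}, whenever $X_i=v_i$ the element $i$ is accepted unless it is spanned by other accepted elements $j$ that come before it in the $\succ$ order. With only two outcomes, $\tau_j(\alpha)$ is an explicit function of $\alpha$. The surrogate $Y_j=\min\{\alpha v_j,\tau_j\}$ is then the quantity that decides whether $j$ blocks $i$.

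\textbf{Step 1 (no blocking).} Suppose $r_{i}(\alpha)\geq 1/(2n)$, where $r_i$ is the acceptance probability conditioned on $X_i=v_i$. Then directly $U_P(t_\alpha)\geq (1-\alpha)v_ip_ir_i\geq U/(2n)$, and we take $\alpha'=\alpha$.

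\textbf{Step 2 (blocking).} Otherwise $i$ is spanned with probability more than $1-1/(2n)$. By a union bound over the at most $n-1$ other elements, some $j$ is accepted ahead of $i$ with non-negligible probability. More usefully, the blockers are accepted with large total probability: the expected number of accepted blockers is at least about $1/2$.

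Each accepted blocker $j$ is accepted either before $i$ is probed ($Y_j\succ\tau_i$) or with $Y_j\succ Y_i$. In both cases this forces $\alpha v_j\geq Y_j\geq\min\{\tau_i,\alpha v_i\}$.

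We split the blockers into two kinds:
\begin{itemize}
\item a blocker with $v_j>v_i$;
\item a blocker with $v_j\leq v_i$, whose surrogate must then be realized at $\alpha v_j$ or at its grade.
\end{itemize}
The goal is to show that the blockers carry a total expected contribution $(1-\alpha)\sum_j v_jp_jr_j$ to the principal of order $U/(2n)$.

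\textbf{Step 3 (transferring to a larger value).} If some blocker $j$ with $v_j>v_i$ has $(1-\alpha)v_jp_j\geq U/(2n)$ and $\tau_j(\alpha)\geq 0$, we apply the induction hypothesis to $j$ in place of $i$. This is valid because there are strictly fewer distinct values above $v_j$ than above $v_i$, and it costs one factor of $2n$. Otherwise, consider the blockers with $v_j\leq v_i$ that are accepted with high probability. Their acceptance also requires $\tau_j\geq\tau_i$ or $\alpha v_j\geq\alpha v_i$. Here we exploit the shape of $\tau_j$: for a two-point distribution, $\tau_j(\alpha)=\alpha v_j-c_j/p_j$ on the relevant range. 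Comparing $\tau_j\geq\tau_i$ gives $c_j/p_j\leq c_i/p_i+\alpha(v_j-v_i)$, which bounds how the contributions $(1-\alpha)v_jp_jr_j$ relate to $(1-\alpha)v_ip_i$. Summing over the blockers then gives $U_P(t_\alpha)\gtrsim U/(2n)$.

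\textbf{Main obstacle.} The hard part is the last comparison in Step 3: blockers with $v_j\leq v_i$ may have tiny $p_j$, so their contribution is not obviously comparable to $U$. I would first try to handle this by considering the larger contract $\alpha'$ equal to the critical value at which $\tau_i$ rises above the blockers' grades. That step makes $r_i$ large while keeping $(1-\alpha')v_ip_i$ within a factor $2n$ of $(1-\alpha)v_ip_i$. If the loss in $(1-\alpha')$ is too large, this would be exactly the case where a higher-valued element dominates, and that case is handled by the induction.
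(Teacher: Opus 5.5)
\textbf{There is a genuine gap.} It sits exactly at what you call the main obstacle, and your induction runs in the wrong direction to close it.

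Blockers with $v_j\ge v_i$ need no induction at all. If with probability at least $1/2$ the element $i$ is spanned by accepted elements of value at least $v_i$, then already $U_P(t_\alpha)\ge\frac12(1-\alpha)v_i\ge\frac12(1-\alpha)v_ip_i\ge U/2$.

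The whole difficulty is blockers with $v_j<v_i$, and neither of your suggestions handles them.
\begin{itemize}
\item \emph{Summing contributions at the given $\alpha$ cannot work.} Choose $c_i$ so that $0\le\tau_i(\alpha)<\alpha v_j$. Then an element $j$ parallel to $i$ with $v_j\ll U$, $c_j\approx 0$ and $p_j\approx 1$ blocks $i$ almost surely, while $U_P(t_\alpha)$ is much smaller than $U$. The inequality $c_j/p_j\le c_i/p_i+\alpha(v_j-v_i)$ gives no lower bound on $v_jp_j$.
\item \emph{Raising $\alpha$ until $\tau_i$ overtakes $\tau_j$ fails too.} Beyond their zeros both grades are linear, with slopes $v_i>v_j$. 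So the crossing is at $\alpha+(\tau_j(\alpha)-\tau_i(\alpha))/(v_i-v_j)$, which can be arbitrarily close to $1$, or beyond $1$ whenever $\tau_j(1)\ge\tau_i(1)$. Then $(1-\alpha')v_ip_i$ is not within any fixed factor of $U$.
\end{itemize}
In that bad case it is not a higher-valued element that dominates, but the lower-valued blocker itself, since $\tau_j(1)\ge\tau_i(1)\ge(1-\alpha)v_i$. Your induction hypothesis, which only moves to larger values, is therefore of no use there.

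\textbf{The missing idea} is to take a fixed step in $\alpha$ and then recurse \emph{downward} onto a blocker; this is how the paper argues.
\begin{enumerate}
\item Move to $\phi_0=(1+\alpha)/2$. This costs only a factor $2$, since $(1-\phi_0)v_ip_i\ge U/2$. It also lifts $\tau_i(\phi_0)\ge\frac12(1-\alpha)v_i\ge U/2$. Hence every element that still blocks $i$ at $\phi_0$ has grade at least $U/2$, which removes the example above.
\item Suppose $U_P(t_{\phi_0})$ is small. Then $i$ is rarely accepted and high-valued blockers are rare, so some single $j$ with $v_j<v_i$ is accepted before $i$ is probed with probability roughly $1/n$.
\item That acceptance requires $Y_j\succ\tau_i(\phi_0)>0$, i.e.\ $X_j=v_j$. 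So $p_j$ is at least roughly $1/n$, which disposes of your worry about tiny $p_j$.
\item Let $\mu$ be the zero of $\tau_j$. Then $(1-\mu)v_j=\tau_j(1)\ge\tau_j(\phi_0)\ge U/2$. At $\phi_1=(1+\mu)/2$ we get $\tau_j(\phi_1)\ge U/4$ and $(1-\phi_1)v_jp_j\ge U/(4n)$. This is the same invariant as at $\phi_0$, with $U/2$ replaced by $U/(4n)$, for an element of strictly smaller value.
\end{enumerate}
Values strictly decrease along this chain, so it stops within $T$ steps, at a contract with $U_P\ge U/(2n)^T$. The right induction parameter is therefore the number of distinct values below $v_i$, not above it.
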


\begin{proof}
Let us prove the statement by contradiction.
Let us define $U_0:=U/2$, $\phi_0:=\alpha+\frac{1-\alpha}{2}$, $i_0:=i$. Note that we have the following properties for $q=0$
\[
    \tau_{i_q}(\phi_q)\geq U_q
\]
and 
\[
    (1-\phi_q) v_{i_q} p_{i_q}\geq U_q\,,
\]
because $\tau_i(\alpha)\geq 0$ and $(1-\alpha)v_ip_i\geq U$. 

For the sake of contradiction, let us consider $q=0$ and assume that $U_P(\phi_q)\leq U_q/n$. Then with probability at least $1/n$ an element $j$ was probed and was accepted before the element  $i_q$ was probed, and moreover $v_j<v_{i_q}$. Since the element $j$ was probed before the element $i_q$ we have $\tau_j(\phi_q)\geq \tau_{i_q}(\phi_q)\geq U_q$. Let $\mu$ be the value such that $\tau_j(\mu)=0$. For $\tau_j(\phi_q)\geq U_q$ to hold, we need to have $\tau_j(1)\geq U_q$ and so we need to have $(1-\mu)v_j\geq U_q$. Let us define $U_{q+1}:=\frac{1}{2n} U_t$, $\phi_{q+1}:=\mu+\frac{1-\mu}{2}$, $i_{q+1}:=j$. Due to the definition of $\phi_{q+1}$, $\mu$ and due to $(1-\mu)v_j\geq U_q$, we have $\tau_{i_{q+1}}(\phi_{q+1})\geq U_q/2$ and so $\tau_{i_{q+1}}(\phi_{q+1})\geq U_{q+1}$. Due to the definition of $\phi_{q+1}$, $\mu$ and due to $p_j\geq 1/n$, we have $(1-\phi_{q+1})v_{i_{q+1}}p_{i_{q+1}}\geq \frac{1}{2n} U_q=U_{q+1}$.

Repeating this argument iteratively, we get a sequence of elements $i_0$, $i_1$, \ldots, $i_T$. However, the values $v_{i_q}$ are nonzeroes and are strictly decreasing with $q$. Since there are at most $T$ possible values we get a contradiction.

\qed\end{proof}

\begin{lemma}\label{lem:undersampling}
    Let us be given a contract $t_\alpha$ and $\mu>0$. Then with probability at least $1-\frac{1}{80 m^2n^2}$ we have
    \[
    (1-\alpha)\sum_{\substack{i\in E,\\k=1,\ldots,m }}v_{i,k} p_{i,k} \rho_{i,k}\leq 80 n^3m^3 U_P(t_\alpha)\,,
    \]
    where $\rho_{i,k}$ are the random variables computed by Algorithm~\ref{ALG:FRUGAL_SAMPLE}.
\end{lemma}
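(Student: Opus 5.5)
Since $\rho_{i,k}$ is nonnegative, the statement follows from Markov's inequality applied to the whole weighted sum. The plan is to compute its expectation, show it equals $U_P(t_\alpha)$ (or is at most a constant multiple of it), and then apply Markov with threshold $80n^3m^3$ times that expectation.

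First, as in the proof of Lemma~\ref{lem:sampling}, each $\rho_{i,k}$ is an empirical average over independent runs of the per-sample indicator in Algorithm~\ref{ALG:FRUGAL_SAMPLE}. By Remark~\ref{rem:element_spanned}, the probability that $i$ passes the test (for $X_i=v_{i,k}$, with the other variables resampled) is exactly $r_{i,k}$. Hence $\mathbb{E}[\rho_{i,k}]=r_{i,k}$, matching the convention in Lemma~\ref{lem:sampling}. The weight $p_{i,k}$ added in the algorithm is already accounted for by the factor $p_{i,k}$ in the sum, consistent with how the paper uses these quantities in Lemma~\ref{lem:ignoring}; if one reads the increment literally, the expectation only drops, since $p_{i,k}\le 1$, and the bound only improves. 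By linearity of expectation,
\[
\mathbb{E}\Bigl[(1-\alpha)\sum_{i\in E,\,k}v_{i,k}p_{i,k}\rho_{i,k}\Bigr]=(1-\alpha)\sum_{i\in E,\,k}v_{i,k}p_{i,k}r_{i,k}=U_P(t_\alpha).
\]
The last equality is the identity stated at the start of the proof of Lemma~\ref{lem:ignoring}: the principal collects $(1-\alpha)v_{i,k}$ exactly when $X_i=v_{i,k}$ and $i$ is accepted.

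Markov's inequality then gives that the sum exceeds $80n^3m^3\,U_P(t_\alpha)$ with probability at most $\frac{1}{80n^3m^3}\le\frac{1}{80m^2n^2}$, which is the claim. If $U_P(t_\alpha)=0$, every term with $v_{i,k}p_{i,k}>0$ has $r_{i,k}=0$, so $\rho_{i,k}=0$ almost surely and the bound holds trivially.

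The only delicate step is justifying $\mathbb{E}[\rho_{i,k}]=r_{i,k}$. This requires checking that the sampled set $S$ in Algorithm~\ref{ALG:FRUGAL_SAMPLE}, built with $Y'_i$ in place of $Y_i$, coincides with the blocking set of Remark~\ref{rem:element_spanned} conditioned on $X_i=v_{i,k}$. This holds because the $X_j$ for $j\neq i$ are independent of $X_i$. I would state this correspondence explicitly and leave the rest to linearity of expectation and Markov.
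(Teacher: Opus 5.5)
Your argument is correct, but it is organized differently from the paper's proof.

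The paper applies Markov's inequality term by term. For each of the $nm$ pairs $(i,k)$ it gets $\Pr[\rho_{i,k}\geq 80m^3n^3 r_{i,k}]\leq \frac{1}{80m^3n^3}$. A union bound makes all these bounds hold simultaneously with probability at least $1-\frac{1}{80m^2n^2}$. The paper then multiplies by $(1-\alpha)v_{i,k}p_{i,k}$ and sums, using $U_P(t_\alpha)=(1-\alpha)\sum_{i,k}v_{i,k}p_{i,k}r_{i,k}$.

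You instead use linearity of expectation to bound the expectation of the whole weighted sum by $U_P(t_\alpha)$, and then apply Markov once. This buys several things:
\begin{itemize}
\item It is shorter and needs no union bound.
\item It gives the stronger failure probability $\frac{1}{80n^3m^3}$. The $\frac{1}{80m^2n^2}$ in the lemma is exactly the union-bound artifact of the paper's route.
\item It avoids the degenerate per-term case $r_{i,k}=0$. There the paper's Markov step as written is vacuous, and one really needs $\rho_{i,k}=0$ almost surely.
\end{itemize}
The paper's route gives the simultaneous termwise bounds $\rho_{i,k}\leq 80m^3n^3 r_{i,k}$, in the per-term style of Lemma~\ref{lem:ignoring}. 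That extra strength is not needed for this statement.

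One small inaccuracy: $U_P(t_\alpha)=0$ does not force $r_{i,k}=0$ for terms with $v_{i,k}p_{i,k}>0$; for example, take $\alpha=1$. The clean way to handle this case is to note that your sum is a nonnegative random variable with expectation at most $U_P(t_\alpha)=0$, so it vanishes almost surely.

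Your remark about the literal increment $p_{i,k}$ in Algorithm~\ref{ALG:FRUGAL_SAMPLE} is right. It only lowers the expectation, so Markov still applies with threshold $80n^3m^3U_P(t_\alpha)$.
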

\begin{proof}
    
For each $i\in I$ and $k=1,\ldots,m$, by Markov's inequality we have 
\[
\Pr[\rho_{i,k}\geq 80 m^3n^3 r_
{i,k}]\leq \frac{1}{80 m^3n^3}\,.
\]
Thus with probability at least $1-\frac{1}{80 m^2n^2}$ for each $i\in I$ and $k=1,\ldots,m$ we have $\rho_{i,k}\leq 80 m^3n^3 r_
{i,k}$. So with probability at least $1-\frac{1}{80 m^2n^2}$ we have
\[U_P(t_\alpha)=(1-\alpha)\sum_{\substack{i\in E,\\k=1,\ldots,m }}v_{i,k} p_{i,k} r_{i,k}\geq \frac{1}{80m^3n^3} (1-\alpha)\sum_{\substack{i\in E,\\k=1,\ldots,m }}v_{i,k} p_{i,k} \rho_{i,k}\,.\]
\qed\end{proof}

\begin{proof}[Proof of Theorem~\ref{thm:FPRAS_special_cases} Case~\eqref{thm:FPRAS_special_cases support two}] Let us define $T$ to be the cardinality of the set \[\{v_{i,k}\,:\, i\in E,\, k=1,\ldots, m\}\,.
    \] 
Let us select $\mu$ such that 
\[
\mu\leq \frac{\varepsilon}{(2n)^T}\frac{1}{160 m^4n^4}\frac{1}{2\cdot 80n^3m^3}\,.
\] 
Let us call a critical value $\alpha$ ``bad" if  there exists $i\in E$ with $\tau_i(\alpha)\geq 0$ and $k=1,\ldots, m$ such that
\begin{equation}\label{eq:constant_outcomes_bound_contribution}
    (1-\alpha)v_ip_i\geq \frac{1}{160m^4n^4}U_P(t_\alpha)\frac{\varepsilon}{\mu}\,.
\end{equation}
Otherwise, we call a critical value ``good".
For each ``bad" critical value $\alpha$,   there exists a critical value $\alpha'$, $\alpha'\in [0,1]$ such that 
\[
U_P(t_{\alpha'})\geq \frac{1}{(2n)^T}\frac{1}{160m^4n^4}U_P(t_\alpha)\frac{\varepsilon}{\mu}\geq 2\cdot 80m^3n^3 U_P(t_\alpha)
\]
by Lemma~\ref{lem:change_critical_value}. 

Thus, no ``bad" critical value can correspond to an optimal linear contract, moreover the principal's utility achieved by a ``bad" critical value is at most $1/(2\cdot 80m^3n^3)$ of the optimal principal's utility. In the rest of the proof, we show that for ``good" critical values Algorithm~\ref{ALG:FRUGAL_SAMPLE} allows to estimate the principal's utility accurately and efficiently. While for ``bad" critical values  Algorithm~\ref{ALG:FRUGAL_SAMPLE} provides an estimate of the principal's utility that is lower than the optimal principal's utility.

 In total, we have at most $10n^2m^2+1$ critical values by Lemma~\ref{lem:critical_values} that can be computed efficiently. 
Thus, if for some critical value $\alpha$ the inequality~\eqref{eq:constant_outcomes_bound_contribution} holds for some $i\in E$ with $\tau_i\geq 0$ and $k=1,\ldots,m$ then $\alpha$ does not correspond to the optimal principal's utility. Moreover, the maximum principal's utility among critical values is at least $2\cdot 80m^3n^3 U_P(t_\alpha)$. 

Also by the union bound and by Lemma~\ref{lem:undersampling}, with probability at least $5/6$ for all ``bad" critical values $\alpha$,
we have
\[
    (1-\alpha)\sum_{\substack{i\in E,\\k=1,\ldots,m }}v_{i,k} p_{i,k} \rho_{i,k}\leq 80 n^3m^3 U_P(t_\alpha)\,.
\]
By union bound and by  Lemma~\ref{lem:ignoring}, with probability at least $5/6$, for all ``good" critical values $\alpha$,   we have
 \[
    (1-\varepsilon) U_P(t_\alpha)\leq(1-\alpha)\sum_{\substack{i\in E,\\k=1,\ldots,m }}v_{i,k} p_{i,k} \rho_{i,k}\leq (1+\varepsilon) U_P(t_\alpha)\,.
    \]

Hence, with probability at least $2/3$ Algorithm~\ref{ALG:FRUGAL_SAMPLE} allows to find an approximately optimal linear contract.
    
\qed\end{proof}

\begin{remark}\label{rem:suport_three reduction}
    Consider OLCPM instances where the number of values in $\{v_{i,k}\,:\, i\in E,\, k=1,\ldots,m\}$ is constant, in particular is at most five. Moreover, for each $i\in E$ the number of nonzero values $\{v_{i,k}\,:\, k=1,\ldots,m\}$ is at most two. Then such OLCPM instances allow reductions from \rpm as outlined in Section~\ref{sec:RPM-to-OLCPM}. 
\end{remark}

    Indeed, for this, one can follow the same process as in Section~\ref{sec:RPM-to-OLCPM}. All results in Section~\ref{sec:RPM-to-OLCPM} follow the same way except for Lemma~\ref{lem:RPM_reduction}. In the proof of Lemma~\ref{lem:RPM_reduction} for $i=2,\ldots,n-1$ we need to use different distributions $\mathcal F_i$. In particular,  for $i=2,\ldots,n-1$ the distribution $\mathcal F_i$ corresponds to taking the value $0$ with probability $1-p_i$, taking the value $1/\varepsilon$ with probability $q_{i,1}$ and taking the value $1/\varepsilon^{n-2}$ with probability $q_{i,2}$, where $q_{i,1}+q_{i,2}=p_i$ and $q_{i,1}/\varepsilon+q_{i,2}/\varepsilon^{n-2}= p_i/\varepsilon^{i-1}$. Changing these distributions in the proof of Lemma~\ref{lem:RPM_reduction} does not change the functions $\tau_i(\cdot)$ on the interval $[0,1]$ and does not change the critical values of interest and subsequent estimates.

\section{General Sequential Contracts}

In this paper, we study linear sequential contracts on matroids. We assume that the access to matroids is granted through an independence oracle. Let us briefly highlight challenges one encounters when instead considering general sequential contracts on matroids.

First of all, the number of possible outcomes the principal could see can be exponential in the size of the input. This makes it unclear how to specify a contract to be offered to an agent. Indeed, a general contract would require the principal to specify the amount to give to the agent, and this amount may be unique for each outcome. So it is unclear how the principal can accomplish even the task of sharing the contract with the agent succinctly.

Second, even in the cases when an optimal general sequential contract is trivial, we cannot efficiently compute the principal's expected utility when we are restricted to use only calls to the independence oracle. Let us assume that we have an even $n$ and let us identify elements in $E$ with the numbers $1,\ldots,n$. Let us consider the following values, probabilities and costs (here $k=2$): $v_{i,1}=2$, $p_{i,1}=1/2$, $v_{i,2}=0$, $p_{i,2}=1/2$ for $i=1,\ldots,n-1$; and $v_{n,1}=1$, $p_{n,1}=1/2$, $v_{n,2}=0$, $p_{n,2}=1/2$; and all costs $c_i$, $i=1,\ldots, n$ are zeroes. Since the costs of actions are zeroes and the agent is expected to break ties to the benefit of the principal, the optimal contract gives the agent zeroes for each outcome. For simplicity, let $\kappa$ be the principal's utility under this contract on the uniform matroid of rank $n/2$. 

Let us now consider a set $B$, $B\subseteq E$, $|B|=n/2$ (that is unknown to us) and a matroid $\mathcal M=(E,\mathcal I)$, where the rank of $\mathcal M$ is $n/2$ and $B$ is the only set of cardinality $n/2$ that is not independent in $\mathcal M$. It is straightforward to verify that  the principal's utility equals $\kappa-1/2^n$ if the element $n$ is in $B$; and equals $\kappa-2/2^n$ if the element $n$ is not in $B$. Indeed, this is due to the fact that reward received by the principal remains the same for the uniform matroid of rank $n/2$ and for matroid $\mathcal M$ with the only exception of the event when the set $\{i\in E\,:\, X_i\neq 0\}$
equals the set $B$. The above event happens with probability $1/2^n$; and under this event if $n$ is in $B$ the principal's utility is decreased by $1$ and if $n$ is in $B$ the principal's utility is decreased by $2$ with respect to the uniform matroid.

Thus, if we are able to compute the principal's utility under the optimal contract both in the uniform matroid and in the matroid $\mathcal M$, we would be able to determine whether a given element $n$ lies in the set $B$. Reordering the elements of $E$, we then can determine the set $B$. However, it is impossible to efficiently determine the set $B$ in the matroid $\mathcal M$ using only independence oracle calls.

\printbibliography 
\end{document}